\documentclass[journal,12pt,onecolumn,draftclsnofoot,]{IEEEtran}

\usepackage[T1]{fontenc}
\usepackage{subcaption}
\usepackage{dsfont}
\usepackage{color}
\usepackage{varioref}
\usepackage{textcomp}
\usepackage{amsthm}
\usepackage{amsmath}
\usepackage{amssymb}
\usepackage{graphicx}
\usepackage{setspace}
\usepackage{esint}
\usepackage{algorithm}
\usepackage{algpseudocode}
\usepackage{pifont}
\usepackage{wrapfig}
\usepackage{multirow}
\usepackage{tabu}
\usepackage{amsmath}

\usepackage{enumitem}
\usepackage{cite}
\usepackage{cleveref}
\usepackage{lmodern}
\usepackage{amssymb}
\usepackage{graphicx,scalerel}
\newcommand\wh[1]{\hstretch{2}{\hat{\hstretch{.5}{#1}}}}

\makeatletter

\newcommand{\Rmnum}[1]{\expandafter\@slowromancap\romannumeral #1@}
\makeatother

\newtheorem{theorem}{Theorem}
\newtheorem{lemma}{Lemma}
\newtheorem{remark}{Remark}

\newtheorem{assumption}{Assumption}
\newtheorem{corollary}{Corollary}
\theoremstyle{definition}

\providecommand{\propositionname}{Proposition}

\ifCLASSINFOpdf

\else

\fi

\hyphenation{op-tical net-works semi-conduc-tor}

\usepackage[T1]{fontenc}
\usepackage{etoolbox}

\makeatletter
\patchcmd{\maketitle}{\@fnsymbol}{\@alph}{}{}  
\makeatother

\title{Convergence of Federated Learning over \\a Noisy Downlink}
\author{\IEEEauthorblockN{Mohammad Mohammadi Amiri, Deniz G\"und\"uz, Sanjeev R. Kulkarni,\\ H. Vincent Poor\thanks{M. Mohammadi Amiri, S. R. Kulkarni, and H. V. Poor are with the Department of Electrical Engineering, Princeton University, Princeton, NJ 08544, USA (e-mail: \{mamiri, kulkarni, poor\}@princeton.edu).} \thanks{D. G\"und\"uz is with the Department of Electrical and Electronic Engineering, Imperial College London, London SW7 2AZ, U.K. (e-mail: d.gunduz@imperial.ac.uk).}
}
}

\begin{document}

\maketitle

\begin{abstract}
We study federated learning (FL), where power-limited wireless devices utilize their local datasets to collaboratively train a global model with the help of a remote parameter server (PS). 
The PS has access to the global model and shares it with the devices for local training using their datasets, and the devices return the result of their local updates to the PS to update the global model.\makeatletter{\renewcommand*{\@makefnmark}{}\footnotetext{This work was supported in part by the U.S. National Science Foundation under Grant CCF-0939370, and by the European Research Council (ERC) Starting Grant BEACON (grant agreement no. 677854).}\makeatother} 
The algorithm continues until the convergence of the global model. 
This framework requires downlink transmission from the PS to the devices and uplink transmission from the devices to the PS.
The goal of this study is to investigate the impact of the bandwidth-limited shared wireless medium in both the downlink and uplink on the performance of FL with a focus on the downlink.
To this end, the downlink and uplink channels are modeled as fading broadcast and multiple access channels, respectively, both with limited bandwidth. 
For downlink transmission, we first introduce a digital approach, where a quantization technique is employed at the PS followed by a capacity achieving channel code to transmit the global model update over the wireless broadcast channel at a common rate such that all the devices can decode it.
Next, we propose analog downlink transmission, where the global model is broadcast by the PS in an uncoded manner. 
We consider analog transmission over the uplink in both cases, since its superiority over digital transmission for uplink has been well studied in the literature.
We further analyze the convergence behavior of the proposed analog transmission approach over the downlink assuming that the uplink transmission is error-free. 
Numerical experiments show that the analog downlink approach provides significant improvement over the digital one, despite a significantly lower transmit power at the PS, with a more notable improvement when the data distribution across the devices is not independent and identically distributed. 
The experimental results corroborate the convergence results, and show that a smaller number of local iterations should be used when the data distribution is more biased, and also when the devices have a better estimate of the global model in the analog downlink approach.
\end{abstract}


\section{Introduction}\label{SecIntro}


Wireless devices, such as mobile phones, wearables, and Internet-of-things (IoT) devices, continuously generate massive amounts of data. 
This massive data can be processed to infer the state of a system, or to anticipate its future states with applications in autonomous driving, unmanned aerial vehicles (UAVs), or extended reality (XR) technologies. Due to the growing storage and computational capabilities of wireless edge devices, it is increasingly attractive to store and process the data locally by shifting network computations to the edge. 
Also, in contrast to traditional machine learning (ML) solutions, it is not desirable to offload such massive amounts of data available at the wireless edge devices to a cloud server for centralized processing due to latency, bandwidth, and power constraints in wireless networks, as well as privacy concerns of users.
\textit{Federated learning} (FL) has emerged as an alternative method enabling ML at the wireless network edge by utilizing wireless edge computational capabilities to process data locally.


In FL the goal is to fit a global model to data generated and stored locally at the wireless devices by exploiting edge processing capabilities collaboratively with the help of a remote parameter server (PS) \cite{DCKonecnyFederated}.
The PS keeps track of the global model, which is updated using the local model updates received from the participating devices, and shares it with the devices for training using their local data.
When FL is employed at the wireless edge, the PS can be a wireless access point or a base station, and the communication between the PS and the devices takes place over the shared wireless medium with limited energy and bandwidth. 
There have been several studies to develop distributed ML techniques with communication constraints \cite{DCKonecnyFederated,McMahan2017CommunicationEfficientLO,GoogleMcMahanFed,KonecnyRandDistMean,SmithFedMultiTask,KonecnyFLBeyondData,NishioFLClientSelecHetRes,FLWithnonIIDZhao,XLiFedAveFLnonIID,COLAFLHe,AgonsticFLMohri}. 
However, these studies focus on limiting the uplink communication from the devices to the PS by assuming rate-limited error-free links, and do not take into consideration the physical layer characteristics of the wireless medium.

Recently there have been efforts to develop a federated edge learning (FEEL) framework considering the physical layer aspects of the underlying wireless medium. 
FL over power- and bandwidth-limited multiple access channel (MAC) for the uplink is studied in \cite{MohammadDenizDSGDCS}, and novel digital and analog transmission techniques at the wireless devices are proposed. 
While the former employs gradient sparsification followed by quantization and channel coding for digital transmission, the latter utilizes the superposition property of the underlying wireless MAC, and introduces a novel bandwidth-efficient transmission technique employing sparsification and linear projection. 
FL over a broadband wireless fading MAC is studied in \cite{KaibinParallelWork}, where the devices have channel state information (CSI) to perform channel inversion, while \cite{CohenAnalogGDDL} proposes analog transmission over the wireless fading MAC without any power control. 
The extension of the approach introduced in \cite{MohammadDenizDSGDCS} to the wireless fading MAC studied in \cite{MohamamdDenizFLOverAirSPAWC19,FLTWCMohammadDenizFading}, which combines the linear projection idea of \cite{MohammadDenizDSGDCS} with power control. 
Furthermore, FL over wireless networks with a multi-antenna PS is studied in \cite{YangFedLearOverAirComp,TungVuFLMassiveMIMO,MohammadTolgaDenizFLGlobalSIP,YoSebMohammadMIMOFL}, where beamforming techniques are used for efficient gradient aggregation at the PS. 
In \cite{RaviCommEffFLGaussian} digital transmission over a Gaussian MAC from the devices to the PS is considered with quantization based on the channel qualities, and \cite{DenizOneBitAgg} studies digital transmission using the over-the-air aggregation property of the wireless MAC.  
Various device scheduling policies are studied for FEEL aiming to select a subset of the devices sharing the limited wireless resources efficiently, including frequency of participation in the training \cite{YangArafaVinceAgeBasedFL}, minimizing the training delay \cite{ShiZhouFastConverFL}, link qualities of the devices \cite{HowardVinceSchedulingsFL}, energy consumption \cite{YuxuanFLRedundantData}, and importance of the model update along with the channel quality \cite{MohammadDenizSanjVinceISIT20}. 
Resource allocation for FEEL is formulated as an optimization problem to speed up training \cite{AccelDNNFLEdge}, to minimize the empirical loss function \cite{ChenVinceJointLearCommFL}, and to minimize the total energy consumption \cite{DinhFLWireNetConver}.
Also, convergence of FEEL with limited bandwidth from the devices to the PS is analyzed in \cite{MohammadFLConvergence}.

All the aforementioned works assume an error-free PS-to-devices shared link, and availability of an accurate global model at the devices for local training. 
In this paper, we consider a bandwidth-limited wireless fading broadcast channel from the PS to the devices with limited transmit power at the PS. 
We introduce \textit{digital} and \textit{analog} transmission approaches over the downlink. 
In the digital downlink, the PS employs quantization followed by channel coding to broadcast the quantized global model update over the wireless fading broadcast channel, at a rate targeting the device with the worst channel, so that all the devices can successfully receive the global model. 
On the other hand, with the analog downlink approach, the PS broadcasts the global model vector in an analog/uncoded manner over the wireless fading broadcast channel, and the devices receive different noisy versions of it. 
We model the uplink from the devices to the PS, over which the devices send their model updates, as a bandwidth-limited fading MAC. 
We follow the existing works highlighting the efficiency of the analog transmission over the uplink fading MAC for FEEL \cite{MohammadDenizDSGDCS,KaibinParallelWork,FLTWCMohammadDenizFading}, and consider analog communications.
The convergence analysis of the proposed digital downlink approach is provided in \cite{MDSV_NIPS2020}.
Here, we provide the convergence analysis of the analog downlink approach, where for ease of analysis we assume error-free uplink transmission and focus on the impact of a noisy downlink transmission on the convergence behavior. 
Our theoretical analysis is complemented with numerical experiments on the MNIST dataset, which clearly illustrate the significant advantages of the analog downlink approach compared to its digital counterpart. 
We observe that the improvement is more significant when the data is not independent and identically distributed (iid) across the devices. 
The performance of both approaches improve with the number of devices thanks to the additional power introduced by each device.
Our numerical results corroborate the analytical convergence analysis, showing that reducing the number of local iterations provides the best performance when introducing bias in the data distribution across the devices. 
Also, both analytical and experimental results show that, for non-iid data distribution, the number of local iterations at the devices should reduce when the transmit power at the PS increases.

Imperfect downlink transmission in FL is also treated in \cite{ExpandingRedClientResFL} and \cite{JinHyunAhnFLNoisyDownlink}. In \cite{ExpandingRedClientResFL}, the shared link from the PS to the devices is assumed to be rate-limited without taking into account the physical layer characteristics of the wireless medium; 
the PS sends a compressed version of the current global model to the devices through quantization. 
The efficiency of quantizing the global model diminishes significantly since the peak-to-average ratio of the parameters is high. 
Therefore, \cite{ExpandingRedClientResFL} proposes employing a linear projection at the PS to first spread the information of the global model vector more evenly across its dimensions, and the devices perform the inverse of the linear projection to estimate the global model vector. 
Instead, in our proposed digital downlink approach, the PS broadcasts the quantized global model update, with respect to the global model estimate at the devices, and the devices recover an estimate of the current global model using their knowledge of the last global model. 
We highlight that the global model update has significantly less variability/variance than the global model itself. 
Hence, compared to the proposed digital downlink approach, the approach in \cite{ExpandingRedClientResFL} requires significantly higher computation overhead at the PS and the devices due to the linear projection and its inverse, respectively, and this overhead grows with the number of model parameters. 
Moreover, the results in both \cite{ExpandingRedClientResFL} and \cite{JinHyunAhnFLNoisyDownlink} are limited to simulations, where \cite{JinHyunAhnFLNoisyDownlink} illustrates the advantages of analog transmission in the downlink but does not provide a convergence result. 
In this paper, we provide an in-depth analysis of the impact of a noisy downlink on the performance of FEEL through extensive experimental results together with theoretical convergence analysis.

The rest of this paper is organized as follows. In Section \ref{SecProbFormul}, we present the system model. The digital and analog downlink approaches are introduced in Section \ref{SecDigDown} and Section \ref{SecAnalog}, respectively. In Section \ref{SecConvergence}, we provide the convergence results of the analog downlink approach. Numerical results are presented in Section \ref{SecExperiments}. Finally, we conclude the paper in Section \ref{SecConc}, and provide a detailed proof of the main theorem in the Appendices.


\textit{Notation}: We denote the set of real, natural and complex numbers by $\mathbb{R}$, $\mathbb{N}$ and $\mathbb{C}$, respectively. 
For $i \in \mathbb{N}$, we let $[i] \triangleq \{ 1, \dots, i \}$. 
We denote a circularly symmetric complex Gaussian distribution with real and imaginary components with variance $\sigma/2$ by $\mathcal{C N} \left( 0,\sigma \right)$. 
For vectors $\boldsymbol{x}$ and $\boldsymbol{y}$ with the same dimension, $\boldsymbol{x} \circ \boldsymbol{y}$ returns their Hadamard/entry-wise product. 
Also, ${\rm{Re}} \{ \boldsymbol{x} \}$ and ${\rm{Im}} \{ \boldsymbol{x} \}$ return entry-wise real and imaginary components of $\boldsymbol{x}$, respectively, and $(\boldsymbol{x})^{-1}$ represents entry-wise inverse of vector $\boldsymbol{x}$. 
The notation $\left| \cdot \right|$ represents the cardinality of a set, the $l_2$-norm of vector $\boldsymbol{x}$ is denoted by $\left\| \boldsymbol{x} \right\|_2$, and $\langle \boldsymbol{x}, \boldsymbol{y} \rangle$ denotes the inner product of vectors $\boldsymbol{x}$ and $\boldsymbol{y}$. The imaginary unit is represented by $j$. 



\section{System Model}\label{SecProbFormul}

We consider FEEL where $M$ wireless devices collaboratively train a model parameter vector $\boldsymbol{\theta} \in \mathbb{R}^d$ with the help of a remote parameter server (PS). Device $m$ has access to $B_m$ local data samples, the set of which is denoted by $\mathcal{B}_m$, i.e., $B_m = \left| \mathcal{B}_m \right|$, $m \in [M]$, and we define $B \triangleq \sum\nolimits_{m=1}^{M} B_m$. The goal is to minimize loss function 
\begin{align}\label{GenEmpLossFunc}
F \left( \boldsymbol{\theta} \right) =  \sum\nolimits_{m=1}^{M} \frac{B_m}{B} F_m \left(\boldsymbol{\theta} \right),   
\end{align}
where $F_m \left(\boldsymbol{\theta} \right)$ denotes the loss function at device $m$, 
\begin{align}
F_m \left( \boldsymbol{\theta} \right) = \frac{1}{B_m} \sum\nolimits_{\boldsymbol{u} \in \mathcal{B}_m} f \left(\boldsymbol{\theta}, \boldsymbol{u} \right), \quad m \in [M],   
\end{align}
where $f(\cdot, \cdot)$ is an empirical loss function defined by the learning task.
Device $m$ performs multiple iterations of stochastic gradient descent (SGD) algorithm based on its local dataset and the global model parameter vector shared by the PS to minimize $F_m \left( \boldsymbol{\theta} \right)$, $m \in [M]$.

FEEL involves iterative communications between the wireless devices and the PS until the model parameter vector converges to its optimum, minimizing loss function $F(\boldsymbol{\theta})$. It consists of \textit{downlink} and \textit{uplink} wireless transmissions, where in the downlink the PS shares the global model parameter vector with the devices for local training, and in the uplink the devices transmit their local model updates to the PS, which updates the global model parameter vector accordingly. 

During the $t$-th global iteration, the PS broadcasts the global model parameter vector, denoted by $\boldsymbol{\theta} (t)$, to the devices over the downlink channel. 
We model the downlink wireless channel as a fading broadcast channel, where OFDM with $n^{\rm{dl}}$ subchannels is employed for transmission. 
We denote the length-$n^{\rm{dl}}$ channel input by the PS at the global iteration $t$ by $\boldsymbol{x}^{\rm{dl}} (t) \in \mathbb{C}^{n^{\rm{dl}}}$, and consider a transmit power constraint $P^{\rm{dl}}$ at the PS at any global iteration. 
The received signal at device $m$ is given by
\begin{align}
\boldsymbol{y}_{m}^{\rm{dl}} (t) =  \boldsymbol{h}^{\rm{dl}}_m (t) \circ \boldsymbol{x}^{\rm{dl}} (t) +  \boldsymbol{z}_m^{\rm{dl}} (t), \quad \mbox{for $m \in [M]$},   
\end{align}
where $\boldsymbol{h}^{\rm{dl}}_m (t) \in \mathbb{C}^{n^{\rm{dl}}}$ is the downlink channel gain vector from the PS to device $m$ with each entry iid according to $\mathcal{CN} (0, \sigma^{\rm{dl}})$, and $\boldsymbol{z}^{\rm{dl}}_m (t) \in \mathbb{C}^{n^{\rm{dl}}}$ is the downlink additive noise vector at device $m$ with each entry iid according to $\mathcal{CN} (0, 1)$. 
We assume that device $m$ has channel state information (CSI) about the downlink channel, and denote the noisy estimate of the global model parameter vector $\boldsymbol{\theta} (t)$ at device $m$ by $\wh{\boldsymbol{\theta}}_m (t)$, $m \in [M]$.

Having estimated $\wh{\boldsymbol{\theta}}_m (t)$, device $m$, $m \in [M]$, updates the model by running SGD $\tau$ steps locally, for some $\tau \in \mathbb{N}$. 
The $i$-th SGD step at device $m$ during global iteration $t$ is given by
\begin{align}
\boldsymbol{\theta}_m^{i+1} (t) = \boldsymbol{\theta}_m^i (t) - \eta^i_m (t) \nabla F_m \left( \boldsymbol{\theta}_m^i (t), \xi_m^i (t) \right),  \quad \mbox{$i \in [\tau]$},   
\end{align}
where $\boldsymbol{\theta}_m^1 (t) = \wh{\boldsymbol{\theta}}_m (t)$, $\eta^i_m (t)$ represents the learning rate, and $\nabla F_m \left( \boldsymbol{\theta}_m^i (t), \xi_m^i (t)  \right)$ denotes the stochastic gradient estimate with respect to $\boldsymbol{\theta}_m^i (t)$ and the local mini-batch sample $\xi_m^i (t)$, chosen uniformly at random from the local dataset $\mathcal{B}_m$, for $m \in [M]$. 
We highlight that 
\begin{align}\label{AverageStochGradientEst}
\mathbb{E}_{\xi} \left[ \nabla F_m \left( \boldsymbol{\theta}_m^i (t), \xi_m^i (t) \right) \right] = \nabla F_m \left( \boldsymbol{\theta}_m^i (t)  \right), \quad \forall i \in [\tau], \forall m \in [M], \forall t,   
\end{align}
where $\mathbb{E}_{\xi}$ denotes expectation with respect to the randomness of the stochastic gradient function. After performing the local SGD algorithm, device $m$ aims to transmit the local model update $\Delta \boldsymbol{\theta}_m (t) = \boldsymbol{\theta}_m^{\tau+1} (t) - {\boldsymbol{\theta}}_m^1 (t)$ to the PS over the uplink channel, $m \in [M]$.

We model the uplink channel as a fading MAC, where, similarly to the downlink, OFDM is employed for transmission. We assume $n^{\rm{up}}$ subchannels are available to each device in the uplink with transmit power constraint $P^{\rm{up}}$ during each global iteration.
The length-$n^{\rm{up}}$ channel input by device $m$ at the global iteration $t$ is denoted by $\boldsymbol{x}^{\rm{up}}_m (t) \in \mathbb{C}^{n^{\rm{up}}}$, for $m \in [M]$. 
The channel output received at the PS during the global iteration $t$ is given by
\begin{align}
\boldsymbol{y}^{\rm{up}} (t) = \sum\nolimits_{m =1}^{M} \boldsymbol{h}^{\rm{up}}_{m} (t) \circ \boldsymbol{x}^{\rm{up}}_{m} (t) + \boldsymbol{z}^{\rm{up}} (t),
\end{align}
where $\boldsymbol{h}^{\rm{up}}_m (t) \in \mathbb{C}^{n^{\rm{up}}}$ is the uplink channel gain vector from device $m$ to the PS with each entry iid according to $\mathcal{CN} (0, \sigma^{\rm{up}})$, and $\boldsymbol{z}^{\rm{up}}_m (t) \in \mathbb{C}^{n^{\rm{up}}}$ is the uplink additive noise vector at the PS with each entry iid according to $\mathcal{CN} (0, 1)$. 
We assume that the PS knows all the channel gains, while each device knows the states of its own subchannels. 
The PS's goal is to recover the average of the local model updates, $\frac{1}{M} \sum\nolimits_{m=1}^{M} \Delta \boldsymbol{\theta}_m (t)$, whose estimate at the PS is denoted by $\Delta \wh{\boldsymbol{\theta}} (t)$, which is then used to obtain the updated global model parameter vector, $\boldsymbol{\theta} (t+1)$.

In this paper, we study the impact of noisy downlink transmission on the performance of FEEL. For this purpose, we consider digital and analog transmission approaches over the downlink channel. 
When performing digital transmission, we assume that the PS has CSI about the downlink wireless channels, while for the analog transmission, no CSI about the downlink channels at the PS is needed. 
On the other hand, following the results in \cite{MohammadDenizDSGDCS,FLTWCMohammadDenizFading,KaibinParallelWork}, which have shown the superiority of analog transmission for the uplink transmission over a wireless MAC, here we only consider analog transmission over the uplink.

\section{Digital Downlink Approach}\label{SecDigDown}


In this section, we present a digital approach for the downlink transmission of the global model update to the devices. 

\subsection{Downlink Channel Capacity}\label{SubSecDownCap} 
At the global iteration $t$, the PS aims to transmit vector $\boldsymbol{x}^{\rm{dl}} (t)$, containing information about the global model vector $\boldsymbol{\theta} (t)$, to all the devices using digital transmission with transmit power ${P}^{\rm{dl}}$ over the bandwidth-limited wireless channel. 
The PS broadcasts $\boldsymbol{x}^{\rm{dl}} (t)$ at a ``common rate'' such that all the devices can decode it. 
The downlink is a parallel fading broadcast channel with $n^{\rm{dl}}$ subchannels, where CSI is known at both the transmitter and the receivers. 
In the following, we provide an upper bound on the maximum common rate of broadcasting over this $n^{\rm{dl}}$ parallel fading channels. 
Given an average transmission power ${P}^{\rm{dl}}$ at global iteration $t$, the maximum common rate of downlink transmission over $n^{\rm{dl}}$ parallel Gaussian channels, denoted by $C^{\rm{dl}} (t)$, is the solution of the following optimization problem \cite{LiangVinceSecure,NirScheersDisOnlyCommInf}:
\begin{align}\label{D_CapacityDef}
&\mathop {\max }\limits_{P_{1}, \dots, P_{n^{\rm{dl}}}} \mathop {\min }\limits_{m \in [M]} \sum\nolimits_{i=1}^{n^{\rm{dl}}} {\log _2}\left( 1 + P_{m,i}^{\rm{dl}} (t) \left| {h}^{\rm{dl}}_{m,i} \left(t\right) \right|^2 \right), \nonumber\\
& \mbox{subject to $\sum\nolimits_{i=1}^{n^{\rm{dl}}} P_{m,i}^{\rm{dl}} (t) = {P}^{\rm{dl}}$, $\forall m \in [M]$}.
\end{align}
The above problem is a convex optimization problem which can be efficiently solved by the minimax hypothesis testing approach \cite{LiangVinceSecure,LiangVinceResMaxMin,NirScheersDisOnlyCommInf}. 
Note that this rate would be achievable by coding across infinitely many realizations of the $n^{\rm{dl}}$ parallel Gaussian channels under consideration, and will serve as an upper bound on the rate transmitted over a single realization. 

\subsection{Compression Technique}\label{SubSecDownCompression} 
In the following, we present the compression technique employed by the PS for transmitting information about the global model over the bandwidth-limited downlink channel, where we adopt the scheme introduced in \cite{DCAlistarhQSGD} with a slight modification. 
Assume that vector $\boldsymbol{x} (t) \in \mathbb{R}^d$, whose $i$-th entry is denoted by ${x}_i (t)$, $i \in [d]$, is to be quantized and transmitted over the downlink channel by the PS.
The PS first sparsifies $\boldsymbol{x} (t)$ by setting all but $s$ entries of $\boldsymbol{x} (t)$ with the highest magnitudes to zero, for some integer $s \le {d}$. 
We denote the set of $s$ indices of the resultant sparse vector with non-zero entries by $\mathcal{S} (t)$.  
We also denote the resultant vector with dimension $s$ after removing the zeroed entries due to the sparsification by $\boldsymbol{x}_s (t)$, whose $i$-th entry is denoted by ${x}_{s, i} (t)$, for $i \in [s]$.
Then the PS quantizes the entries of $\boldsymbol{x}_s (t)$, and transmits the quantized values along with their locations in $\boldsymbol{x} (t)$, which are available in set $\mathcal{S} (t)$. 
We define
\begin{subequations}
\begin{align}\label{DefThetaMaxMin}
{x}_{\rm{max}} &\triangleq \max_{i \in \left[s \right]} \left\{ \left| {{x}_{s,i}} (t) \right| \right\},\\
{x}_{\rm{min}} &\triangleq \min_{i \in \left[s \right]} \left\{ \left| {{x}_{s,i}} (t) \right| \right\}.
\end{align}
\end{subequations}
Given a quantization level $q(t)$, which will be determined later, we define the compression technique applied to the $i$-th entry of ${\boldsymbol{x}}_s (t)$, for $i \in [s]$, as 
\begin{subequations}\label{DefQuanTech}
\begin{align}
{Q} \left({x}_{s,i} (t)\right) \triangleq {\rm{sign}} \left( {x}_{s,i} (t) \right) \cdot \Big( x_{\rm{min}} + \left(x_{\rm{max}} - x_{\rm{min}} \right) \cdot  \varphi \Big( \frac{| {x}_{s,i} (t) | - x_{\rm{min}}}{x_{\rm{max}} -x_{\rm{min}}}, q(t) \Big) \Big),    
\end{align}
where, for $x \in \mathbb{R}$,  
\begin{align}
{\rm{sign}} \left( x \right) \triangleq \begin{cases} 
1, & \mbox{if $x \ge 0$},\\
-1, & \mbox{otherwise},
\end{cases}    
\end{align}
and $\varphi(\cdot, \cdot)$ is a quantization function defined in the following. For $0 \le x \le 1$ and some integer $q \ge 1$, let $l \in \{ 0, 1, \dots, q-1 \}$ be an integer such that $x \in [l/q, (l+1)/q)$. We then define
\begin{align}\label{DefQ}
\varphi \left( x, q \right) \triangleq \begin{cases} 
l / q, & \mbox{with probability $1 - \left( x q - l\right)$},\\
(l+1) / q, & \mbox{with probability $x q - l$}.
\end{cases}
\end{align}
\end{subequations}
We denote the compressed version of ${x}_{i} (t)$ by $S \left({x}_{i}(t)\right)$, for $i \in [d]$, which is given by
\begin{align}
S \left({x}_i(t)\right) = 
\begin{cases} 
Q \left({x}_i(t)\right), & \mbox{if $i \in \mathcal{S} (t)$},\\
0, & \mbox{otherwise},
\end{cases}    
\end{align}
and represent $\boldsymbol{S} \left(\boldsymbol{x}(t)\right) = \left[S \left({x}_1 (t) \right), \dots, S \big({x}_{d} (t) \big)\right]^T$. 
Note that we normalize the entries of $\boldsymbol{x}_s (t)$ with $x_{\rm{max}} - x_{\rm{min}}$ rather than $\| \boldsymbol{x}_s (t) \|_2$ as introduced in \cite{DCAlistarhQSGD}.

With the above compression technique, the PS needs to transmit
\begin{align}\label{Rdltbits}
R^{\rm{dl}}(t) = 64 + s \left( 1 + \log_2(q(t)+1) \right) + \log_2 \binom{ d}{s} \mbox{ bits}     
\end{align}
over the wireless broadcast channel to each of the devices, where 64 bits are used to represent the real numbers ${x}_{\rm{max}}$ and ${x}_{\rm{min}}$, $s$ bits for presenting ${\rm{sign}} \left( {{x}_{s,i}} (t) \right)$, $\forall i \in [s]$, $s \log_2(q(t)+1)$ bits are used for $\varphi \left( \left(| {x}_{s,i} (t) |-x_{\rm{min}} \right)/ (x_{\rm{max}} - x_{\rm{min}}), q \right)$, $\forall i \in [s]$, and $\log_2 \binom{ d}{s}$ bits represent the indices of $\boldsymbol{x} (t)$ in set $\mathcal{S} (t)$. 
We set $q(t)$ to the largest integer satisfying $R^{\rm{dl}}(t) \le C^{\rm{dl}}(t)$.

\subsection{Model Update}\label{SubSecModelUpdateDig}
Here we present the model update scheme including the global model update broadcasting from the PS to the devices and aggregation of the local updates via uplink transmission from the devices to the PS.

\noindent \textbf{Downlink transmission.} We first elaborate on the downlink transmission.
We highlight that, for the digital downlink approach, all the devices have the same estimate of ${\boldsymbol{\theta}} (t)$ during global iteration $t$, denoted by $\wh{\boldsymbol{\theta}} (t)$, i.e., $\wh{\boldsymbol{\theta}}_m (t) = \wh{\boldsymbol{\theta}} (t)$, $\forall m \in [M]$. 
In the downlink, at the global iteration $t$, the PS wants to broadcast the global model update $\boldsymbol{\theta} (t) - \wh{\boldsymbol{\theta}} (t-1)$ to all the devices.  
We define 
\begin{align}
\Delta \wh{\boldsymbol{\theta}} (t -1) \triangleq \boldsymbol{\theta} (t) - \wh{\boldsymbol{\theta}} (t-1) \in \mathbb{R}^d.    
\end{align}
The PS first quantizes $\Delta \wh{\boldsymbol{\theta}} (t -1)$ using the compression technique described in Section \ref{SubSecDownCompression}, obtaining $\boldsymbol{S} \left( \Delta \wh{\boldsymbol{\theta}} (t -1) \right) = \boldsymbol{S}\left(\boldsymbol{\theta} (t) - \wh{\boldsymbol{\theta}} (t-1) \right)$, which results in $R^{\rm{dl}}(t)$ bits as given in \eqref{Rdltbits}. 
The PS then broadcasts these bits to all the devices using a capacity achieving channel code, where $q(t)$ is set to the largest integer satisfying $R^{\rm{dl}}(t) \le C^{\rm{dl}}(t)$, where $C^{\rm{dl}}(t)$ given as the solution of \eqref{D_CapacityDef}. 
After decoding $\boldsymbol{S}\big(\boldsymbol{\theta} (t) - \wh{\boldsymbol{\theta}} (t-1) \big)$, each device computes $\wh{\boldsymbol{\theta}} (t)$ as
\begin{align}
\wh{\boldsymbol{\theta}} (t) = \wh{\boldsymbol{\theta}} (t-1) + \boldsymbol{S}\big(\boldsymbol{\theta} (t) - \wh{\boldsymbol{\theta}} (t-1) \big),    
\end{align}
which is equivalent to
\begin{align}
\wh{\boldsymbol{\theta}} (t) = {\boldsymbol{\theta}} (0) + \sum\nolimits_{i=1}^{t} \boldsymbol{S}\big(\boldsymbol{\theta} (i) - \wh{\boldsymbol{\theta}} (i-1) \big),    
\end{align}
where we have assumed that $\wh{\boldsymbol{\theta}} (0) = {\boldsymbol{\theta}} (0)$.
Having knowledge about the compressed vector $\boldsymbol{S}\big(\boldsymbol{\theta} (i) - \wh{\boldsymbol{\theta}} (i-1) \big)$, $\forall i \in [t]$, the PS can also recover $\wh{\boldsymbol{\theta}} (t)$, which is used at the devices to compute the local updates.

\noindent \textbf{Uplink transmission.} For ease of presentation, we assume that $n^{\rm{up}} = d/2$, and we will discuss the generalization of the prorposed approach.
Device $m$, $m \in [M]$, performs $\tau$ local SGD steps, where the $i$-th step is given by 
\begin{align}\label{D_ithStepSGDDevicem}
\boldsymbol{\theta}_m^{i+1} (t) = \boldsymbol{\theta}_m^i (t) - \eta^i_m (t) \nabla F_m \left( \boldsymbol{\theta}_m^i (t), \xi_m^i (t) \right),  \quad \mbox{$i \in [\tau]$},   
\end{align}
where $\boldsymbol{\theta}_m^1 (t) = \wh{\boldsymbol{\theta}} (t)$.
It then transmits the local model update $\Delta \boldsymbol{\theta}_m (t) = \boldsymbol{\theta}_m^{\tau+1} (t) - \wh{\boldsymbol{\theta}} (t)$ in an analog (uncoded) fashion. We define 
\begin{subequations}
\begin{align}
\Delta \boldsymbol{\theta}_{m,\rm{re}} (t) &\triangleq [\Delta \theta_{m,1} (t), \dots, \Delta \theta_{m,d/2} (t)]^T, \\
\Delta \boldsymbol{\theta}_{m,\rm{im}} (t) &\triangleq [\Delta \theta_{m, d/2+1} (t), \dots, \Delta \theta_{m,d} (t)]^T,
\end{align}
\end{subequations}
where $\Delta \theta_{m,i} (t)$ denotes the $i$-th entry of $\Delta \boldsymbol{\theta}_{m} (t)$, for $i \in [d]$, $m \in [M]$, and we have $\Delta \boldsymbol{\theta}_m (t) = \left[{\Delta \boldsymbol{\theta}_{m,\rm{re}} (t)}^T, {\Delta \boldsymbol{\theta}_{m,\rm{im}} (t)}^T \right]^T$. 
Device $m$, $m \in [M]$, transmits 
\begin{align}
\boldsymbol{x}^{\rm{ul}}_m (t) = \boldsymbol{\alpha}^{\rm{ul}}_m(t) \circ  \left( \Delta \boldsymbol{\theta}_{m,\rm{re}} (t) + j \Delta \boldsymbol{\theta}_{m,\rm{im}} (t) \right),    
\end{align}
where $\boldsymbol{\alpha}^{\rm{ul}}_m(t) \in \mathbb{C}^{d/2}$ is the power allocation vector, whose $i$-th entry, $i \in [d/2]$, is set as
\begin{align}\label{UplinkPowerAllocVec}
{\alpha}^{\rm{ul}}_{m,i} (t) =
\begin{cases} 
\frac{\gamma_m (t)}{h^{\rm{ul}}_{m,i}(t)}, & \mbox{if $| h^{\rm{ul}}_{m,i}(t) | \ge \lambda_{{\rm{thr}}} (t)$},\\
0, &\mbox{otherwise}, 
\end{cases}    
\end{align}
for some $\gamma_m (t), \lambda_{{\rm{thr}}} (t) \in \mathbb{R}$, which are set to satisfy the transmit power constraint $\| \boldsymbol{x}^{\rm{ul}}_m (t) \|_2^2 \le {P}^{\rm{ul}}$. 
We assume that device $m$ first transmits the scaling factor $\gamma_m (t)$ to the PS in an error-free fashion, $m \in [M]$. 
The PS receives the following signal:
\begin{align}
\boldsymbol{y}^{\rm{ul}} (t) = \sum\nolimits_{m = 1}^{M} \boldsymbol{\alpha}^{\rm{ul}}_m(t) \circ  \left( \Delta \boldsymbol{\theta}_{m,\rm{re}} (t) + j \Delta \boldsymbol{\theta}_{m,\rm{im}} (t) \right) \circ \boldsymbol{h}^{\rm{ul}}_{m}(t) +  \boldsymbol{z}^{\rm{ul}} (t),    
\end{align}
whose $i$-th entry, $i \in [d/2]$, is given by
\begin{align}
{y}^{\rm{ul}}_i (t) = \sum\nolimits_{m \in \mathcal{M}_i(t)} \gamma_m (t) \left( \Delta {\theta}_{m,i} (t) + j \Delta {\theta}_{m,d/2+i} (t) \right) + {z}^{\rm{ul}}_i (t),  
\end{align}
where we have defined 
\begin{align}
\mathcal{M}_i(t) \triangleq \left\{ m \in [M]: \left| h^{\rm{ul}}_{m,i}(t) \right| \ge \lambda_{{\rm{thr}}} (t) \right\}.    
\end{align}
\begin{algorithm}[t]
\caption{Digital Downlink Approach}
\label{ModelUpdateAlg}
\begin{algorithmic}[1]
\Statex
\State{\textbf{Initialize} $\boldsymbol{\theta} (0)$}
\For {$t = 0, \ldots, T-1$}
\Statex
\begin{itemize}
\item \textbf{Downlink transmission:}
\end{itemize}
\State{PS broadcasts $\boldsymbol{S}\big(\boldsymbol{\theta} (t) - \wh{\boldsymbol{\theta}} (t-1) \big)$}
\State{$\wh{\boldsymbol{\theta}} (t) = \wh{\boldsymbol{\theta}} (t-1) + \boldsymbol{S}\big(\boldsymbol{\theta} (t) - \wh{\boldsymbol{\theta}} (t-1) \big)$}
\Statex
\begin{itemize}
\item \textbf{Uplink transmission:}
\end{itemize}
\For {$m = 1, \ldots, M$ in parallel}
\State{$\boldsymbol{x}^{\rm{ul}}_m (t) = \boldsymbol{\alpha}^{\rm{ul}}_m(t) \circ  \left( \Delta \boldsymbol{\theta}_{m,\rm{re}} (t) + j \Delta \boldsymbol{\theta}_{m,\rm{im}} (t) \right)$}
\State{${\alpha}^{\rm{ul}}_{m,i} (t) =
\begin{cases} 
\frac{\gamma_m (t)}{h^{\rm{ul}}_{m,i}(t)}, & \mbox{if $\left| h^{\rm{ul}}_{m,i}(t) \right| \ge \lambda_{{\rm{thr}}} (t)$},\\
0, &\mbox{otherwise} 
\end{cases}$, \; for $i \in [d/2]$}
\EndFor
\State{$\boldsymbol{\theta} (t+1) = \wh{\boldsymbol{\theta}} (t) + \Delta \wh{\boldsymbol{\theta}} (t)$}
\EndFor
\end{algorithmic}\label{Dig_Downlink}
\end{algorithm}
\hspace{-.27cm} With the knowledge of the channel state, and consequently $\mathcal{M}_i(t)$, $\forall i \in [d/2]$, the PS estimates $\frac{1}{\left| \mathcal{M}_i(t) \right|} \sum\nolimits_{m \in \mathcal{M}_i(t)} \Delta {\theta}_{m,i} (t)$ and $\frac{1}{\left| \mathcal{M}_{i}(t) \right|} \sum\nolimits_{m \in \mathcal{M}_i(t)} \Delta {\theta}_{m,d/2+i} (t)$ with
\begin{subequations}\label{DigitalUplinkDecoder}
\begin{align}
\Delta \hat{\theta}_{i} (t) & =  
\begin{cases}
\frac{{\rm{Re}} \left\{ {y}^{\rm{ul}}_i (t) \right\}}{\bar{\gamma} (t) \left| \mathcal{M}_i(t) \right|}, & \mbox{if $\left| \mathcal{M}_i(t) \right| \ne 0$},\\
0, & \mbox{otherwise},
\end{cases}\\
\Delta \hat{\theta}_{d/2+i} (t) & =  
\begin{cases}
\frac{{\rm{Im}} \left\{ {y}^{\rm{ul}}_i (t) \right\}}{\bar{\gamma} (t) \left| \mathcal{M}_i(t) \right|}, & \mbox{if $\left| \mathcal{M}_i(t) \right| \ne 0$},\\
0, & \mbox{otherwise},
\end{cases}
\end{align}
\end{subequations}
respectively, where we have defined $\bar{\gamma} (t) \triangleq \frac{1}{M} \sum\nolimits_{m =1 }^{M} {\gamma}_m (t)$.
The estimated vector $\Delta \wh{\boldsymbol{\theta}} (t) \triangleq [ \Delta \hat{\theta}_{1} (t) , \dots, \Delta \hat{\theta}_{d} (t)]^T$ is used to update the global model parameter vector as 
\begin{align}
\boldsymbol{\theta} (t+1) = \wh{\boldsymbol{\theta}} (t) + \Delta \wh{\boldsymbol{\theta}} (t).    
\end{align}
We remark here that for $n^{\rm{up}} < d/2$, we carry out the uplink transmission in $\left\lceil {d/(2n^{\rm{up}})} \right\rceil$ time slots, where in each time slot we perform the above transmission.

Algorithm \ref{Dig_Downlink} summarizes the downlink and uplink transmissions for the digital downlink approach employing the compression technique presented in Section \ref{SubSecDownCompression}.

\section{Analog Downlink Approach}\label{SecAnalog}

In this section, we propose that the PS broadcasts the global model parameter vector $\boldsymbol{\theta}(t)$ in an analog (uncoded) manner. For ease of presentation, we consider $n^{\rm{dl}} = d/2$,
and we will argue that the proposed approach can be readily extended to the general case.


\noindent \textbf{Downlink transmission.} We define
\begin{subequations}
\begin{align}
\boldsymbol{\theta}_{\rm{re}} (t) &\triangleq [\theta_1 (t), \dots, \theta_{d/2} (t)]^T, \\
\boldsymbol{\theta}_{\rm{im}} (t) &\triangleq [\theta_{d/2+1} (t), \dots, \theta_{d} (t)]^T,
\end{align}
\end{subequations}
where $\boldsymbol{\theta} (t) = \left[ {\boldsymbol{\theta}_{\rm{re}} (t)}^T, {\boldsymbol{\theta}_{\rm{im}} (t)}^T \right]^T$.
At the global iteration $t$, the PS broadcasts $\boldsymbol{x}^{\rm{dl}} (t) = {\alpha}^{\rm{dl}} (t) \left(\boldsymbol{\theta}_{\rm{re}} (t) + j \boldsymbol{\theta}_{\rm{im}} (t) \right)$ in an uncoded manner, where ${\alpha}^{\rm{dl}} (t)$ is set to satisfy $\| \boldsymbol{x}^{\rm{dl}} (t) \|_2^2 \le {P}^{\rm{dl}}$. Before broadcasting $\boldsymbol{x}^{\rm{dl}} (t)$, we assume that the PS shares ${\alpha}^{\rm{dl}} (t)$ with the devices in an error-free fashion. The received signal at device $m$ is given by 
\begin{align}
\boldsymbol{y}_{m}^{\rm{dl}} (t) = {\alpha}^{\rm{dl}} (t) \boldsymbol{h}^{\rm{dl}}_m (t) \circ \left(\boldsymbol{\theta}_{\rm{re}} (t) + j \boldsymbol{\theta}_{\rm{im}} (t) \right) +  \boldsymbol{z}_m^{\rm{dl}} (t), \quad m \in [M].    
\end{align}
Device $m$, $m \in [M]$, performs the following descaling:
\begin{align}\label{YHat_A_Downlink}
\wh{\boldsymbol{y}}_{m}^{\rm{dl}} (t) \triangleq \Big( \frac{1}{{\alpha}^{\rm{dl}} (t)} \Big) \boldsymbol{y}_{m}^{\rm{dl}} (t) \circ \left( \boldsymbol{h}^{\rm{dl}}_m (t) \right)^{-1} = \boldsymbol{\theta}_{\rm{re}} (t) + j \boldsymbol{\theta}_{\rm{im}} (t) + \Big( \frac{1}{{\alpha}^{\rm{dl}} (t)} \Big) \boldsymbol{z}_{m}^{\rm{dl}} (t) \circ \left( \boldsymbol{h}^{\rm{dl}}_m (t) \right)^{-1},    
\end{align}
and uses $\wh{\boldsymbol{y}}_{m}^{\rm{dl}} (t)$ to recover the global model parameter vector $\boldsymbol{\theta} (t)$ as
\begin{align}\label{A_RecoveredThetaDL}
\wh{\boldsymbol{\theta}}_m (t) \triangleq \left[{\rm{Re}} \{ \wh{\boldsymbol{y}}_{m}^{\rm{dl}} (t) \}^T , {\rm{Im}} \{ \wh{\boldsymbol{y}}_{m}^{\rm{dl}} (t) \}^T \right]^T.
\end{align}
We highlight that the proposed approach can be extended for any number of subchannels $n^{\rm{dl}}$ through transmission over different time slots.

\noindent \textbf{Uplink transmission.} After recovering $\wh{\boldsymbol{\theta}}_m (t)$, device $m$, $ m \in [M]$, performs $\tau$ local SGD steps as in \eqref{D_ithStepSGDDevicem}, where $\boldsymbol{\theta}_m^1 (t) = \wh{\boldsymbol{\theta}}_m (t)$. 
It then transmits the local model update $\Delta \boldsymbol{\theta}_m (t) = \boldsymbol{\theta}_m^{\tau+1} (t) - \wh{\boldsymbol{\theta}}_m (t)$ in an analog (uncoded) fashion over the wireless MAC, $m \in [M]$. 
The uplink transmission follows the same steps as the one presented in Section \ref{SubSecModelUpdateDig} for the digital downlink approach. However, the PS recovers $\Delta \wh{\boldsymbol{\theta}} (t)$, given in \eqref{DigitalUplinkDecoder}, and updates the global model parameter vector as $\boldsymbol{\theta} (t+1) = \boldsymbol{\theta} (t) + \Delta \wh{\boldsymbol{\theta}} (t)$.

\begin{remark}\label{RemAnalogPrivacy}
We highlight that with the independent random noise added to the model parameter vector in the downlink at different devices, the analog downlink approach inherently introduces additional data privacy for the FL framework.  
\end{remark}

\section{Convergence Analysis of Analog Downlink Approach}\label{SecConvergence}
Here we analyze convergence behavior of the analog downlink approach presented in Section \ref{SecAnalog}. For simplicity of the convergence analysis, we assume that the device-to-PS transmission is error-free, and focus on the impact of noisy downlink transmission on the convergence performance.
We first present the preliminaries and assumptions, and then the convergence result for the analog downlink approach, whose proof is provided in the Appendix.

\subsection{Preliminaries}\label{SubSecPreConvergence}
We define the optimal solution of minimizing $F \left( \boldsymbol{\theta} \right)$ as 
\begin{align}\label{ThetaStarDef}
\boldsymbol{\theta}^* \triangleq \arg \mathop {\min }\limits_{\boldsymbol{\theta}} F(\boldsymbol{\theta}),  
\end{align}
and the minimum loss as $F^* \triangleq F(\boldsymbol{\theta}^*)$. We also denote the minimum value of $F_m (\cdot)$, the local loss function at device $m$, by $F_m^*$, $m \in [M]$. We then define
\begin{align}\label{GammaDef}
\Gamma \triangleq F^* -  \sum\nolimits_{m=1}^{M} \frac{B_m}{B} F^*_m,    
\end{align}
where $\Gamma \ge 0$, and its magnitude indicates the bias in the data distribution across devices. We note that for i.i.d. data distribution, given a large enough number of local data samples, $\Gamma$ approaches zero.

According to \eqref{YHat_A_Downlink} and \eqref{A_RecoveredThetaDL}, we have
\begin{align}\label{ThetamHatAnalogDownlinkConvergence}
\wh{\boldsymbol{\theta}}_m (t) = {\boldsymbol{\theta}} (t) + \widetilde{\boldsymbol{z}}_m^{\rm{dl}} (t), \quad \mbox{for $m \in [M]$},    
\end{align}
where, for ease of presentation, we have defined 
\begin{align}
\widetilde{\boldsymbol{z}}_m^{\rm{dl}} (t) \triangleq \Big( \frac{1}{{\alpha}^{\rm{dl}} (t)} \Big) \left[ {\rm{Re}} \big\{ \boldsymbol{z}_{m}^{\rm{dl}} (t) \circ \left( \boldsymbol{h}^{\rm{dl}}_m (t) \right)^{-1} \big\}^T , {\rm{Im}} \big\{ \boldsymbol{z}_{m}^{\rm{dl}} (t) \circ \left( \boldsymbol{h}^{\rm{dl}}_m (t) \right)^{-1} \big\}^T \right]^T.    
\end{align}
For simplicity of the convergence analysis, we consider $\eta_m^i (t) = \eta(t)$, $\forall m ,i$. Thus, the $i$-th step local SGD at device $m$ is given by
\begin{align}\label{ConvSGDDevicem}
\boldsymbol{\theta}_m^{i+1} (t) = \boldsymbol{\theta}_m^i (t) - \eta (t) \nabla F_m \left( \boldsymbol{\theta}_m^i (t), \xi_m^i (t) \right),  \quad \mbox{$i \in [\tau]$}, \mbox{$m \in [M]$},   
\end{align}
where $\boldsymbol{\theta}_m^1 (t) = \wh{\boldsymbol{\theta}}_m (t)$, given in \eqref{ThetamHatAnalogDownlinkConvergence}. Thus, we have
\begin{align}\label{ConvSGDDevicemith_2}
\boldsymbol{\theta}_m^{\tau+1} (t) = {\boldsymbol{\theta}}_m^1 (t) - \eta (t) \sum\nolimits_{i=1}^{\tau} \nabla F_m \left( \boldsymbol{\theta}_m^i (t), \xi_m^i (t) \right), \quad \mbox{for $m \in [M]$}.    
\end{align}
Device $m$ transmits the local model update $\Delta \boldsymbol{\theta}_m (t) = - \eta (t) \sum\nolimits_{i=1}^{\tau} \nabla F_m \left( \boldsymbol{\theta}_m^i (t), \xi_m^i (t) \right)$, $m \in [M]$. After receiving the local model updates from all the devices, $\Delta \boldsymbol{\theta}_m (t)$, $\forall m \in [M]$, the PS updates the global model parameter vector as
\begin{align}\label{D_ConvPSGlobalUpdateModelDifScheduled}
\boldsymbol{\theta} (t+1) = {\boldsymbol{\theta}} (t) +  \sum\limits_{m = 1}^M \frac{B_m}{B} \Delta \boldsymbol{\theta}_m (t) = {\boldsymbol{\theta}} (t) - \eta (t) \sum\limits_{m = 1}^M \sum\limits_{i=1}^{\tau} \frac{B_m}{B} \nabla F_m \left( \boldsymbol{\theta}_m^i (t), \xi_m^i (t) \right).
\end{align}

\begin{assumption}\label{AssumpSmoothLoss}
The loss functions $F_1, \dots, F_M$ are all $L$-smooth; that is, $\forall \boldsymbol{v}, \boldsymbol{w} \in \mathbb{R}^d$, 
\begin{align}\label{ConvLSmoothCondit}
F_m(\boldsymbol{v}) - F_m(\boldsymbol{w}) \le \langle \boldsymbol{v} - \boldsymbol{w} , \nabla F_m (\boldsymbol{w}) \rangle + \frac{L}{2} \left\| \boldsymbol{v} - \boldsymbol{w} \right\|^2_2, \quad \forall m \in [M].
\end{align}
\end{assumption}

\begin{assumption}\label{AssumpStrongConvexLoss}
The loss functions $F_1, \dots, F_M$ are all $\mu$-strongly convex; that is, $\forall \boldsymbol{v}, \boldsymbol{w} \in \mathbb{R}^d$, 
\begin{align}\label{ConvMuStConvexCondit}
F_m(\boldsymbol{v}) - F_m(\boldsymbol{w}) \ge \langle \boldsymbol{v} - \boldsymbol{w} , \nabla F_m (\boldsymbol{w}) \rangle + \frac{\mu}{2} \left\| \boldsymbol{v} - \boldsymbol{w} \right\|^2_2, \quad \forall m \in [M].      
\end{align}
\end{assumption}

\begin{assumption}\label{AssumpBoundedVarGradient}
The expectation of the squared $l_2$-norm of the stochastic gradients are bounded; that is,
\begin{align}\label{ConvNorm2Bound}
\mathbb{E}_{\xi} \left [ \left\| \nabla F_m \left( \boldsymbol{\theta}_m^i (t), \xi_m^i (t) \right) \right\|^2_2 \right] \le G^2, \quad \forall i \in [\tau], \forall m \in [M], \; \forall t.      
\end{align}
\end{assumption}

\begin{assumption}\label{AssumpBoundThetaHatmAnalog}
We assume 
\begin{align}\label{Assumption_Analog_Conv}
\mathbb{E} \left[ \left\| \sum\nolimits_{m=1}^{M} \frac{B_m}{B} \left( \nabla F_m ( \boldsymbol{\theta} (t) + \widetilde{\boldsymbol{z}}_m^{\rm{dl}} (t), \xi_m^1 (t) ) - \nabla F_m ( \boldsymbol{\theta} (t), \xi_m^1 (t) ) \right) \right\|^2 \right] \le \frac{ Z^2}{M \sigma^{\rm{dl}} {P}^{\rm{dl}}},    
\end{align}
for some $Z \in \mathbb{R}$, where the upper bound reduces with the variance of the downlink channel gains, the downlink transmit power, and the number of devices, $M$. We have assumed that the effect of the downlink noise is alleviated by averaging over the devices. 
\end{assumption}

\subsection{Convergence Rate}\label{SubSecConvDig}
Here we provide the convergence rate for the analog downlink approach introduced in Section \ref{SecAnalog} assuming that the devices can send their local model updates accurately.

\begin{theorem}\label{A_Theoremtheta_thetastarKequalM}
Let $0 < \eta(t) \le \min \left\{ \frac{\mu}{\mu + 1}, \frac{1}{\mu \tau} \right\}$, $\forall t$. For the analog downlink approach, we have
\begin{subequations}\label{A_ConvTheoremtheta_thetastarKequalM}
\begin{align}\label{A_ConvTheoremtheta_thetastarKequalM_main}
\mathbb{E} \left[ \left\| \boldsymbol{\theta} (t) - {\boldsymbol{\theta}}^* \right\|_2^2 \right] \le  \left( \prod\nolimits_{i=0}^{t-1} A(i) \right) \left\| {\boldsymbol{\theta}} (0) - {\boldsymbol{\theta}}^* \right\|_2^2 + \sum\nolimits_{j=0}^{t-1} B(j) \prod\nolimits_{i=j+1}^{t-1} A(i),  
\end{align}
where 
\begin{align}\label{A_ConvTheoremtheta_thetastarKequalM_A}
A(i) \triangleq & 1 - \mu \eta (i) \left( \tau - \eta(i) (\tau - 1 + 1/\mu) \right),\\ 
B(i) \triangleq &  \frac{Z^2}{M \sigma^{\rm{dl}} {P}^{\rm{dl}}} + \left( 1+ \mu (1- \eta(i)) \right) \eta^2(i) G^2 \frac{\tau (\tau-1)(2\tau-1)}{6} \nonumber\\
&+  \left( \tau - 1 +\eta^2(i) (\tau^2 + \tau-1) \right) G^2  + 2  \eta(i) (\tau - 1) \Gamma, \label{A_ConvTheoremtheta_thetastarKequalM_B}
\end{align}
\end{subequations}
and the expectation is with respect to the stochastic gradient function and the randomness of the underlying wireless channel.
\end{theorem}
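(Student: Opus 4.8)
The plan is to establish a single-step recursion of the form $\mathbb{E}[\|\boldsymbol{\theta}(t+1) - \boldsymbol{\theta}^*\|_2^2] \le A(t)\,\mathbb{E}[\|\boldsymbol{\theta}(t) - \boldsymbol{\theta}^*\|_2^2] + B(t)$ with exactly the $A(t)$ and $B(t)$ of \eqref{A_ConvTheoremtheta_thetastarKequalM_A}--\eqref{A_ConvTheoremtheta_thetastarKequalM_B}, and then to unroll it. The unrolling is the routine part: iterating $a_{t+1} \le A(t)a_t + B(t)$ from $a_0 = \|\boldsymbol{\theta}(0) - \boldsymbol{\theta}^*\|_2^2$ by induction on $t$ produces the product $\prod_{i=0}^{t-1}A(i)$ multiplying the initial gap together with the weighted sum $\sum_{j=0}^{t-1}B(j)\prod_{i=j+1}^{t-1}A(i)$, which is precisely \eqref{A_ConvTheoremtheta_thetastarKequalM_main}. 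All of the substantive work lies in deriving the recursion.

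Starting from the global update \eqref{D_ConvPSGlobalUpdateModelDifScheduled}, I would expand
\begin{align*}
\|\boldsymbol{\theta}(t+1) - \boldsymbol{\theta}^*\|_2^2 = \|\boldsymbol{\theta}(t) - \boldsymbol{\theta}^*\|_2^2 - 2\eta(t)\Big\langle \boldsymbol{\theta}(t) - \boldsymbol{\theta}^*, \sum_{m,i}\tfrac{B_m}{B}\nabla F_m(\boldsymbol{\theta}_m^i(t),\xi_m^i(t))\Big\rangle + \eta^2(t)\Big\|\sum_{m,i}\tfrac{B_m}{B}\nabla F_m(\boldsymbol{\theta}_m^i(t),\xi_m^i(t))\Big\|_2^2,
\end{align*}
and take expectations. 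The quadratic (last) term is controlled by Jensen's inequality over the convex weights $B_m/B$, Cauchy--Schwarz over the $\tau$ inner steps, and Assumption \ref{AssumpBoundedVarGradient}, producing a contribution of order $\eta^2(t)\tau^2G^2$; this is the source of the $\eta^2(i)(\tau^2+\tau-1)G^2$ piece of $B(i)$.

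The heart of the proof is the cross term. After taking the conditional expectation over the fresh mini-batches I replace each stochastic gradient by its true local gradient $\nabla F_m(\boldsymbol{\theta}_m^i(t))$ via \eqref{AverageStochGradientEst}, split the inner product over the $\tau$ local steps, and for each $i$ write $\langle \boldsymbol{\theta}(t) - \boldsymbol{\theta}^*, \nabla F_m(\boldsymbol{\theta}_m^i(t))\rangle = \langle \boldsymbol{\theta}_m^i(t) - \boldsymbol{\theta}^*, \nabla F_m(\boldsymbol{\theta}_m^i(t))\rangle + \langle \boldsymbol{\theta}(t) - \boldsymbol{\theta}_m^i(t), \nabla F_m(\boldsymbol{\theta}_m^i(t))\rangle$. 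The first piece is lower bounded by $\mu$-strong convexity (Assumption \ref{AssumpStrongConvexLoss}) as $F_m(\boldsymbol{\theta}_m^i(t)) - F_m(\boldsymbol{\theta}^*) + \frac{\mu}{2}\|\boldsymbol{\theta}_m^i(t) - \boldsymbol{\theta}^*\|_2^2$; weighting the function-value gaps by $B_m/B$ and inserting $\pm\sum_m \frac{B_m}{B}F_m^*$ produces the bias constant $\Gamma$ of \eqref{GammaDef}, yielding the $2\eta(i)(\tau-1)\Gamma$ term, while the $\frac{\mu}{2}\|\boldsymbol{\theta}_m^i(t) - \boldsymbol{\theta}^*\|_2^2$ terms supply the contraction factor $-\mu\eta(t)\tau$ that dominates $A(t)$.

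Two further ingredients close the argument. First, the drift $\boldsymbol{\theta}_m^i(t) - \boldsymbol{\theta}_m^1(t) = -\eta(t)\sum_{j=1}^{i-1}\nabla F_m(\boldsymbol{\theta}_m^j(t),\xi_m^j(t))$ is bounded in squared norm by $\eta^2(t)(i-1)^2G^2$ using Cauchy--Schwarz and Assumption \ref{AssumpBoundedVarGradient}; summing $\sum_{i=1}^{\tau}(i-1)^2 = \frac{\tau(\tau-1)(2\tau-1)}{6}$ accounts for the cubic-in-$\tau$ coefficient of $B(i)$, and, paired with Young's inequality applied to the second inner-product piece above, supplies the remaining $G^2$ terms of $B(i)$ and the residual $\mu\eta^2(t)(\tau-1)$ and $\eta^2(t)$ corrections of $A(t)$. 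Second, the $i=1$ step is special because its gradient is evaluated at the noisy point $\boldsymbol{\theta}(t) + \widetilde{\boldsymbol{z}}_m^{\rm{dl}}(t)$ of \eqref{ThetamHatAnalogDownlinkConvergence}; here I invoke Assumption \ref{AssumpBoundThetaHatmAnalog} to absorb the discrepancy between $\nabla F_m(\boldsymbol{\theta}(t) + \widetilde{\boldsymbol{z}}_m^{\rm{dl}}(t))$ and $\nabla F_m(\boldsymbol{\theta}(t))$, contributing the noise term $Z^2/(M\sigma^{\rm{dl}}P^{\rm{dl}})$ and explaining why several sums run over $\tau-1$ rather than $\tau$ steps. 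The main obstacle I anticipate is the bookkeeping in the cross term: pairing each application of Young's inequality so that its ``bad'' half lands on $\|\boldsymbol{\theta}(t) - \boldsymbol{\theta}^*\|_2^2$ (feeding $A(t)$) rather than being discarded, and tracking constants so that $A(t) = 1 - \mu\eta(t)(\tau - \eta(t)(\tau - 1 + 1/\mu))$ emerges with precisely these coefficients. Finally, I would verify that the stated range $0 < \eta(t) \le \min\{\mu/(\mu+1), 1/(\mu\tau)\}$ keeps $A(t)\in(0,1)$, so that the recursion is contractive.
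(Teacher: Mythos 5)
Your proposal follows essentially the same route as the paper's proof in Appendix \ref{A_AppTheorem}: the same three-term expansion of $\mathbb{E}[\|\boldsymbol{\theta}(t+1)-\boldsymbol{\theta}^*\|_2^2]$, the same Jensen/Cauchy--Schwarz plus Assumption \ref{AssumpBoundedVarGradient} bound $\eta^2(t)\tau^2G^2$ on the quadratic term, the same splitting of the cross term into the $i=1$ step (where Assumption \ref{AssumpBoundThetaHatmAnalog} absorbs the downlink-noise perturbation and yields $Z^2/(M\sigma^{\rm{dl}}P^{\rm{dl}})$) and the $i\ge 2$ steps (which the paper isolates as Lemma \ref{LemmaTermE}, with the strong-convexity/$\Gamma$ argument, the drift bound summing to $\tau(\tau-1)(2\tau-1)/6$, and the Young-inequality bookkeeping you describe), followed by the standard unrolling. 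The only detail you gloss over is that the non-vanishing $(\tau-1)G^2$ term arises specifically from the inner product of the downlink noise $\widetilde{\boldsymbol{z}}_m^{\rm{dl}}(t)$ with the gradients at steps $i\ge 2$ (inequality \eqref{AppLemmaTemr_1_D_Eq_12}), and that the condition $\eta(t)\le\mu/(\mu+1)$ is what allows the residual noise-variance term in Lemma \ref{LemmaTermE} to be dropped; but these are points of bookkeeping, not of approach.
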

\begin{proof}
See Appendix \ref{A_AppTheorem}. 
\end{proof}

\begin{corollary}\label{A_CorrConvF_FstarKequalM}
From the $L$-smoothness of function $F(\cdot)$, after $T$ global iterations of the analog downlink scheme, for $0 < \eta(t) \le \min \left\{ \frac{\mu}{\mu + 1}, \frac{1}{\mu \tau} \right\}$, $\forall t$, we have
\begin{align}\label{A_ConvF_FstarKequalM}
\mathbb{E} \left[ F( \boldsymbol{\theta} (T)) \right] - F^* \le & \frac{L}{2} \mathbb{E} \left[ \left\| \boldsymbol{\theta} (T) - {\boldsymbol{\theta}}^* \right\|_2^2 \right] \nonumber \\
\le & \frac{L}{2} \left( \prod\nolimits_{i=0}^{T-1} A(i) \right) \left\| {\boldsymbol{\theta}} (0) - {\boldsymbol{\theta}}^* \right\|_2^2 + \frac{L}{2} \sum\nolimits_{j=0}^{T-1} B(j) \prod\nolimits_{i=j+1}^{T-1} A (i), 
\end{align}
where the last inequality follows from \eqref{A_ConvTheoremtheta_thetastarKequalM_main}. 
\end{corollary}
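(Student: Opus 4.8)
The plan is to prove the two inequalities in \eqref{A_ConvF_FstarKequalM} separately; the first is the genuine content of the corollary, while the second is an immediate consequence of Theorem \ref{A_Theoremtheta_thetastarKequalM}. First I would establish that the aggregate loss $F(\cdot) = \sum_{m=1}^{M} \frac{B_m}{B} F_m(\cdot)$ inherits $L$-smoothness from its components. Since each $F_m$ satisfies \eqref{ConvLSmoothCondit} and the weights $B_m/B$ are nonnegative and sum to one, taking the convex combination of the per-device $L$-smoothness inequalities yields, for all $\boldsymbol{v}, \boldsymbol{w} \in \mathbb{R}^d$,
\begin{align}
F(\boldsymbol{v}) - F(\boldsymbol{w}) \le \langle \boldsymbol{v} - \boldsymbol{w}, \nabla F(\boldsymbol{w}) \rangle + \frac{L}{2} \left\| \boldsymbol{v} - \boldsymbol{w} \right\|_2^2,
\end{align}
where $\nabla F = \sum_m \frac{B_m}{B} \nabla F_m$ by linearity of the gradient.

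Next I would specialize this bound to $\boldsymbol{v} = \boldsymbol{\theta}(T)$ and $\boldsymbol{w} = \boldsymbol{\theta}^*$. The key step is to observe that the inner-product term vanishes: since $\boldsymbol{\theta}^*$ is defined in \eqref{ThetaStarDef} as the unconstrained minimizer of $F$ over $\mathbb{R}^d$, and $F$ is differentiable (indeed $L$-smooth), the first-order optimality condition gives $\nabla F(\boldsymbol{\theta}^*) = \boldsymbol{0}$. This collapses the smoothness inequality to $F(\boldsymbol{\theta}(T)) - F^* \le \frac{L}{2} \left\| \boldsymbol{\theta}(T) - \boldsymbol{\theta}^* \right\|_2^2$, where I have used $F^* = F(\boldsymbol{\theta}^*)$. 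Taking expectation of both sides over the stochastic gradients and the downlink channel randomness, and noting that the deterministic quantities $\frac{L}{2}$ and $F^*$ pass through $\mathbb{E}[\cdot]$, yields the first inequality $\mathbb{E}[F(\boldsymbol{\theta}(T))] - F^* \le \frac{L}{2} \mathbb{E}[\left\| \boldsymbol{\theta}(T) - \boldsymbol{\theta}^* \right\|_2^2]$.

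Finally, the second inequality follows by bounding $\mathbb{E}[\left\| \boldsymbol{\theta}(T) - \boldsymbol{\theta}^* \right\|_2^2]$ through Theorem \ref{A_Theoremtheta_thetastarKequalM} evaluated at $t = T$. Since the learning-rate hypothesis $0 < \eta(t) \le \min\{\mu/(\mu+1), 1/(\mu\tau)\}$ is exactly the one required by that theorem, its bound \eqref{A_ConvTheoremtheta_thetastarKequalM_main} applies, and multiplying through by the positive constant $L/2$ preserves the inequality direction, producing the stated right-hand side. There is no substantive obstacle here: the entire argument is a one-line application of smoothness followed by invoking the already-proved theorem. The only point warranting care is the justification that $\nabla F(\boldsymbol{\theta}^*) = \boldsymbol{0}$, which relies on the unconstrained nature of the minimization in \eqref{ThetaStarDef} together with differentiability of $F$ guaranteed by Assumption \ref{AssumpSmoothLoss}; the strong convexity of Assumption \ref{AssumpStrongConvexLoss} additionally ensures this stationary point is the unique global minimizer.
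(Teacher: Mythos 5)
Your proposal is correct and follows essentially the same route as the paper, which justifies the first inequality by the $L$-smoothness of $F(\cdot)$ (implicitly using $\nabla F(\boldsymbol{\theta}^*) = \boldsymbol{0}$ at the unconstrained minimizer, exactly as you spell out) and the second by direct substitution of the bound \eqref{A_ConvTheoremtheta_thetastarKequalM_main} from Theorem \ref{A_Theoremtheta_thetastarKequalM}. Your additional details---that $F$ inherits $L$-smoothness as a convex combination of the $F_m$, and that expectation preserves the pointwise inequality---are standard steps the paper leaves implicit, not a different argument.
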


\begin{remark}\label{RemConvKlessMpi}
We remark that $A(i)$ is a decreasing function of $\tau$, while $B(i)$ increases with $\tau$. Therefore, the impact of $\tau$ on the convergence performance in the general case is not evident, since it depends also on other parameters. 
However, for a more biased data distribution across devices, which results in a higher $\Gamma$ and $G$, the destructive effect of increasing $\tau$ on $B(i)$ is more significant, while the reduction in $A(i)$ is the same as having a less biased data distribution. We note that $A(i)$ is not a function of the data distribution; 
therefore, for a less diverse data distribution, designing an efficient $\tau$ is more critical. 
This corroborates our intuitive understanding of convergence in this problem, where for a more biased data distribution, increasing the number of local iterations excessively leads to a more divergent local updates with a less chance of convergence.         
\end{remark}

\begin{remark}\label{RemTau}
The two terms, $\frac{Z^2}{M \sigma^{\rm{dl}} {P}^{\rm{dl}}}$ and $(\tau-1) G^2$ in $B(i)$, are not scaled with the learning rate, $\eta(i)$. 
Therefore, even for a decreasing learning rate, where $\mathop {\lim }\limits_{t \to \infty } \eta(t) = 0$, we have $\mathop {\lim }\limits_{t \to \infty } B(t) = \frac{Z^2}{M \sigma^{\rm{dl}} {P}^{\rm{dl}}} + (\tau-1) G^2 \ne 0$, which shows that $\mathop {\lim }\limits_{t \to \infty } \mathbb{E} \left[ F( \boldsymbol{\theta} (t)) \right] - F^* \ne 0$. 
We highlight that having these two terms is the result of the noisy downlink transmission, where $\frac{Z^2}{M \sigma^{\rm{dl}} {P}^{\rm{dl}}}$ and $(\tau-1) G^2$ have appeared in the convergence analysis in inequalities \eqref{A_AppLemmaTemr_2_Eq_5} and \eqref{AppLemmaTemr_1_D_Eq_12}, respectively, in the appendices.           
\end{remark}

\section{Numerical Experiments}\label{SecExperiments}

Here we compare the performance of the proposed digital and analog downlink approaches for image classification on the MNIST dataset \cite{LeCunMNIST} with $60000$ training and $10000$ test samples. We train a convolutional neural network (CNN) with 6 layers including two $5 \times 5$ convolutional layers with ReLU activation and the same padding, where the first and the second layers have 32 and 64 channels, respectively, each with stride 1, and followed by a $2 \times 2$ max pooling layer with stride 2. Also, the CNN has a fully connected layer with 1024 units and ReLU activation with dropout $0.8$ followed by a softmax output layer. We utilize ADAM optimizer \cite{ADAMDC} to train the CNN.

We consider two scenarios: in the \textit{iid data distribution} scenario, we randomly split the $60000$ training data samples to $M$ disjoint subsets, and allocate each subset of data samples to a different device; 
while in the \textit{non-iid data distribution} scenario, we split the training data samples with the same label (from the same class) to $M/5$ disjoint subsets (assume that $M$ is divisible by 5). 
We then assign two subsets of the data samples, each from a different label/class selected at random, to each device, such that each subset of the data samples is assigned to a single device.

We assume $n^{\rm{dl}}=n^{\rm{ul}}=d/2$ subchannels, and a variance of $\sigma^{\rm{dl}} = \sigma^{\rm{ul}} = 1$ for the downlink and uplink channel gains. 
We set the transmit power constraint at the devices to ${P}^{\rm{ul}} = 10$, and the threshold on the uplink channel gains to $\lambda_{{\rm{thr}}} (t) = 10^{-4}$, $\forall t$. 
We also set the sparsity level of the digital downlink approach to $s = \left\lfloor {d/50} \right\rfloor$ and the size of the local mini-batch sample for each local iteration to $\left| \xi_m^i (t) \right| = 500$, $\forall i, m, t$.   
We measure the performance as the accuracy with respect to the test samples, called \textit{test accuracy}, versus the global iteration count, $t$.    

For the analytical results on the convergence rate of the analog downlink approach, we set $\eta (t) = \frac{\min\left\{ \frac{\mu}{\mu+1}, \frac{1}{\mu \tau}\right\}}{(10^{-3}t + 1)}$, $\forall t$, and consider $M=40$ devices. We assume that $\mu = 0.2$, $L=10$, $\left\| {\boldsymbol{\theta}} (0) - {\boldsymbol{\theta}}^* \right\|_2^2 = 5 \times 10^3$, and $Z^2 = 2 \times 10^4$. We also model the iid and non-iid data distributions by setting $(G^2, \Gamma) = (10, 5)$ and $(G^2, \Gamma) = (100, 50)$, respectively, where we note that the non-iid scenario results in higher $G$ and $\Gamma$ values.


\begin{figure}[t!]
\centering
\begin{subfigure}{.5\textwidth}
  \centering
  \includegraphics[scale=0.55,trim={19pt 7pt 46pt 38pt},clip]{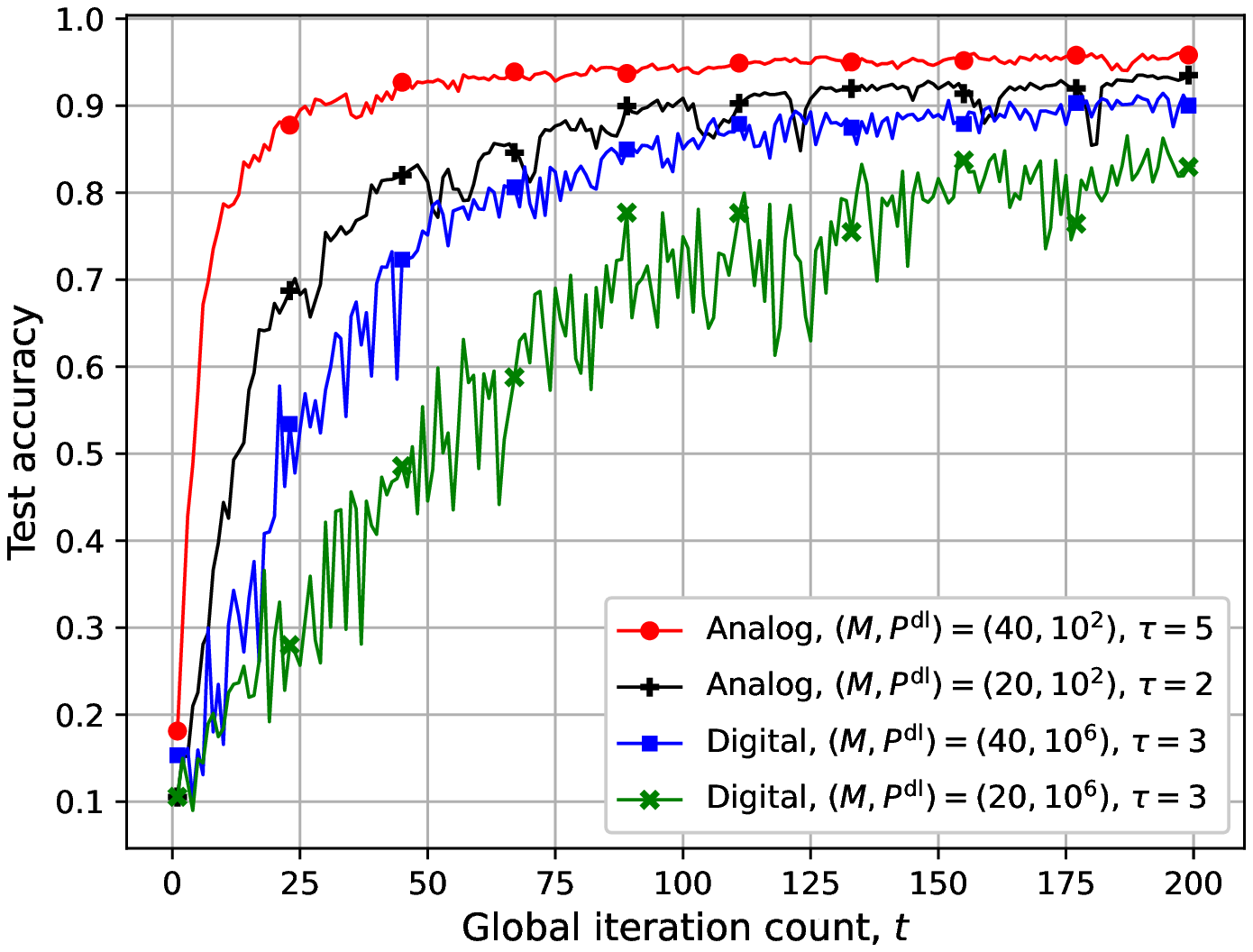}\vspace{0cm}
  \caption{Iid data distribution}
  \label{Fig_IID}
\end{subfigure}%
\begin{subfigure}{.5\textwidth}
  \centering
  \includegraphics[scale=0.55,trim={19pt 7pt 46pt 38pt},clip]{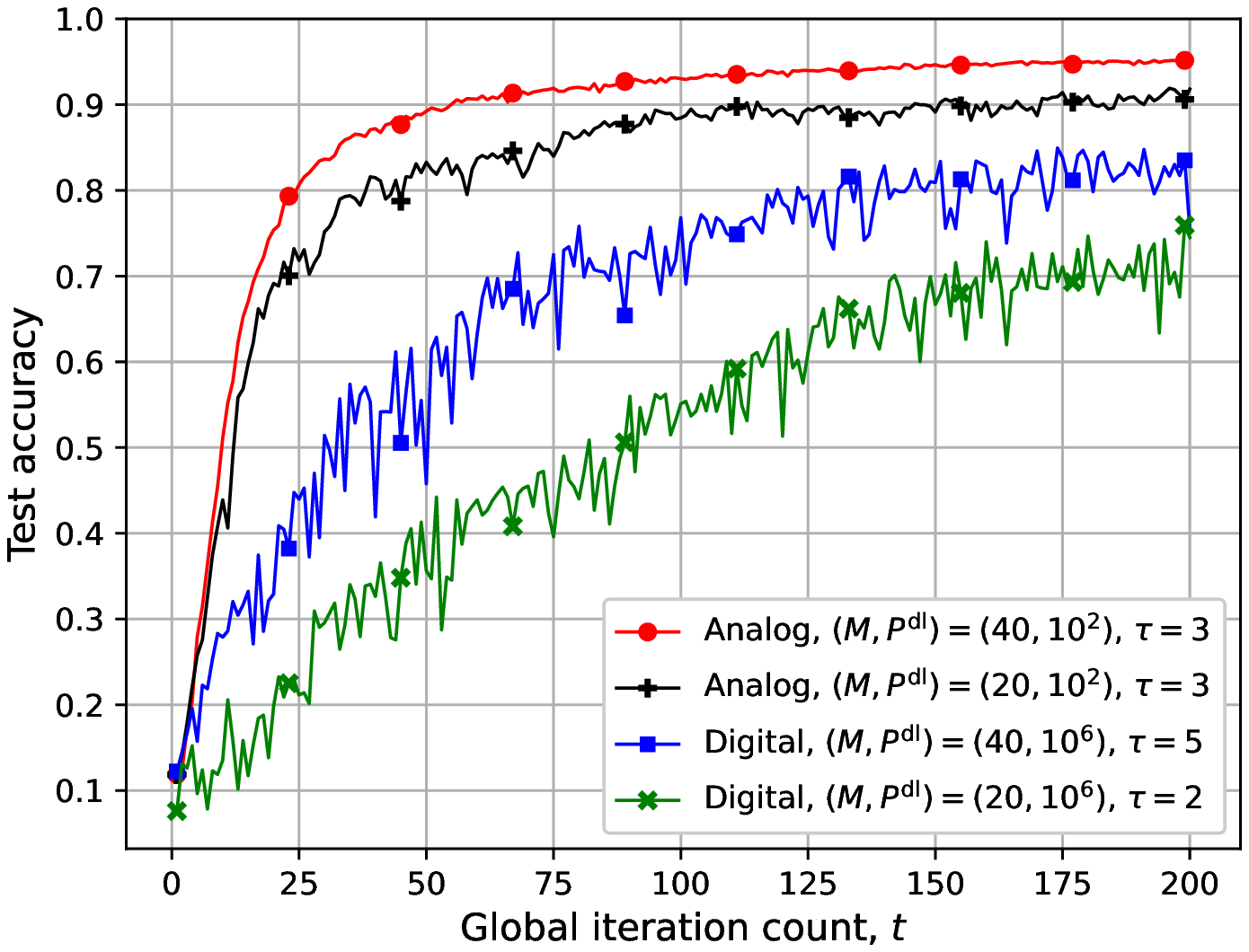}\vspace{0cm}
  \caption{Non-iid data distribution}
  \label{Fig_nonIID}
\end{subfigure}
\caption{Accuracy of the digital and analog downlink approaches for $n^{\rm{dl}}=n^{\rm{ul}}=d/2$, $\sigma^{\rm{dl}} = \sigma^{\rm{ul}} = 1$, $P^{\rm{ul}} = 10$, $\lambda_{{\rm{thr}}} (t) = 10^{-4}$, $\forall t$, $s = \left\lfloor {d/50} \right\rfloor$ for the digital approach, and $\left| \xi_m^i (t) \right| = 500$, $\forall i, m, t$.}
\label{Fig_IID_nonIID}
\end{figure}

In Fig. \ref{Fig_IID_nonIID} we compare the performance of the proposed digital and analog downlink approaches for both the iid and non-iid data distribution scenarios. We investigate the impact of the number of devices on the performance by considering $M \in \{ 20, 40\}$. 
For the analog downlink approach, we consider $P^{\rm{dl}} = 10^2$; while for the digital approach, we consider a significantly higher value for the downlink transmit power constraint at the PS, $P^{\rm{dl}} = 10^6$, which is to make sure that $q(t) \ge 1$, $\forall t$.
For each experiment, whose result is illustrated in Fig. \ref{Fig_IID_nonIID}, we have found the number of local iterations, $\tau$, which results in the best accuracy.
Despite the significantly lower transmit power at the PS, we observe that the analog downlink scheme remarkably outperforms the digital one for both iid and non-iid scenarios with a notably larger gap between the two for the non-iid case.  
It can also be seen that the accuracy of the analog downlink approach is more stable than its digital counterpart, and the degradation in the performance of the analog approach due to the introduced bias in the non-iid data distribution is marginal. 
This shows that the analog approach is fairly robust against the heterogeneity of data distribution across devices. 
We highlight that with the analog downlink approach the destructive effect of the devices with relatively bad channel conditions, and consequently with a noisier/less accurate estimate of the global model, is alleviated with the devices with good channel conditions, since devices receive different estimates of the global model vector transmitted by the PS depending on their channel conditions. 
On the other hand, with the digital downlink approach the common rate at which the global model vector is delivered to the devices should be adjusted such that all the devices, including those with relatively bad channel conditions, can decode it. 
This limits the capacity of the devices with good channel conditions, and provides the same copy of the global model estimate to all the devices whose rate is adjusted to accommodate even the worst device. 
Another reason for the inferiority of the digital downlink approach is that it requires digitization/quantization of the model parameter vector to a limited number of bits, which provides a less accurate estimate of the global model vector to rely on for local training at the devices than the noisy estimate received from the analog downlink transmission. This is due to the limited capacity of the wireless broadcast channel.

The performance of both digital and analog downlink approaches improve with $M$ for both iid and non-iid scenarios. 
This is mainly due to the uplink transmission. With more devices, each with its own power budget, analog transmission over the MAC is more robust against the noise, which is due to the additive nature of the MAC. 
However, the accuracy of the digital downlink approach is unstable in both iid and non-iid cases.
This is due to the inaccurate model parameter vector estimate at the devices for the digital downlink approach, which leads to a more skewed/less similar local updates at the devices compared to the case of having the actual model parameter vector at the devices. 
This deficiency can be clearly seen for $M=20$ in the iid scenario. By relying on the local updates from fewer devices, the chance of having more similar local updates (local updates with relatively small Euclidean distance) decreases, and it is less likely that the resultant vector recovered from the output of the MAC provides a good estimate of the gradient of the actual model parameter vector.    
Another interesting observation is about the best number of local iterations $\tau$ for each experiment. 
We observe that the best $\tau$ value for the analog downlink approach for $M=40$ ($M=20$) in the iid case is the same as that for the digital downlink approach for $M=40$ ($M=20$) in the non-iid scenario. 
\begin{figure}[t!]
\centering
\begin{subfigure}{.5\textwidth}
  \centering
  \includegraphics[scale=0.55,trim={19pt 7pt 46pt 38pt},clip]{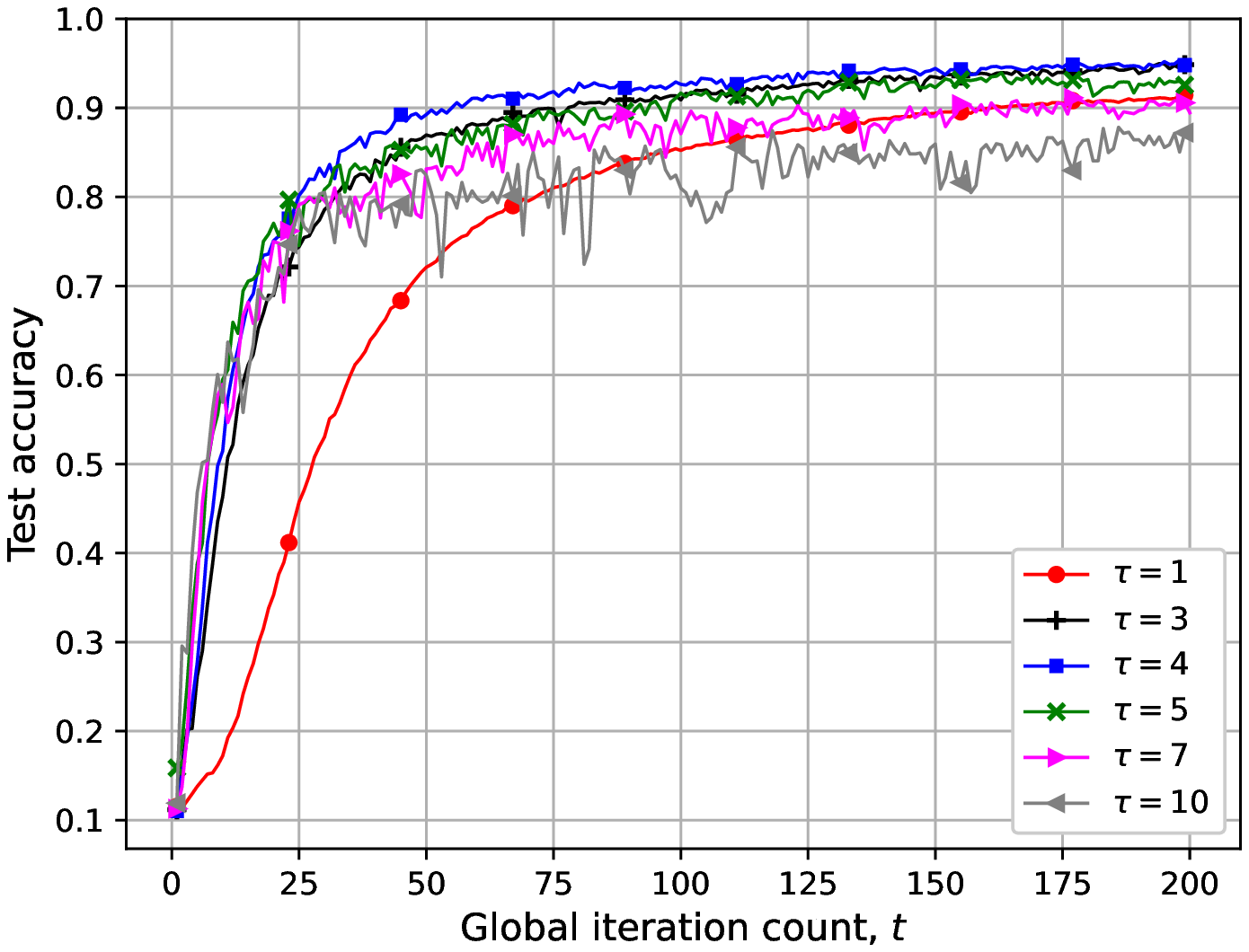}\vspace{0cm}
  \caption{$P^{\rm{dl}} = 10$}
  \label{Fig_Analog_low_power}
\end{subfigure}%
\begin{subfigure}{.5\textwidth}
  \centering
  \includegraphics[scale=0.55,trim={19pt 7pt 46pt 38pt},clip]{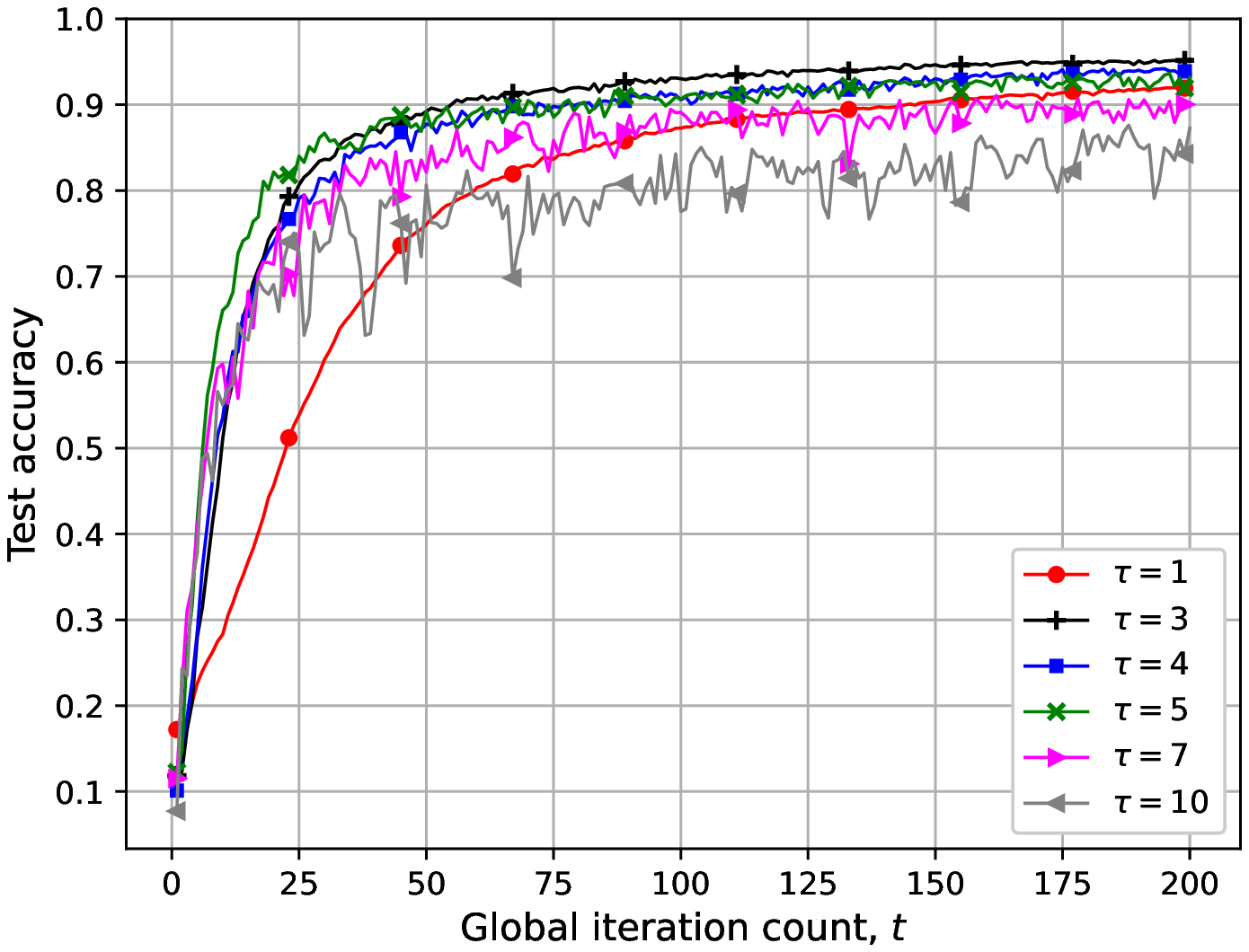}\vspace{0cm}
  \caption{$P^{\rm{dl}} = 10^2$}
  \label{Fig_Analog_high_power}
\end{subfigure}
\caption{Accuracy of analog downlink for the non-iid data distribution with $M=40$, $n^{\rm{dl}}=n^{\rm{ul}}=d/2$, $\sigma^{\rm{dl}} = \sigma^{\rm{ul}} = 1$, $P^{\rm{ul}} = 10$, $\lambda_{{\rm{thr}}} (t) = 10^{-4}$, $\forall t$, and $\left| \xi_m^i (t) \right| = 500$, $\forall i, m, t$.}
\label{Fig_Analog_low_high_power}
\end{figure}
The same observation can be made also for the performance of the digital downlink approach in the iid case and the analog downlink approach in the non-iid scenario. 
The reason for this opposite behavior is that, in contrast to the digital downlink approach, with the analog approach the devices have a relatively good estimate of $\boldsymbol{\theta}(t)$. 
For the analog downlink approach with sufficiently many devices, i.e., $M=40$, the best $\tau$ value for the iid case is larger than that for the non-iid case. 
This is intuitive since increasing $\tau$ excessively for the non-iid case provides biased local updates at the devices, which is due to the biased local datasets, with a relatively poor similarity. 
On the other hand, the digital downlink approach for $M=40$ shows the opposite behavior, which is due to the relatively inaccurate estimate of $\boldsymbol{\theta}(t)$ at the devices. 
In this case, for the iid scenario, in which the local data is homogeneous, the inaccuracy of the model parameter vector estimate harms the performance when a relatively large number of local SGD iterations are performed for both $M$ values.
Whereas, for $M=40$ in the non-iid scenario, a relatively small $\tau$ might not provide reliable local updates, since the local training dataset is biased and a relatively good estimate of $\boldsymbol{\theta}(t)$ is not available to rely on. 
On the other hand, for the digital approach with $M=20$, where devices receive a more accurate estimate of $\boldsymbol{\theta}(t)$, due to the higher achievable common rate, a relatively small $\tau$ value provides a better performance.
A similar observation is made for the analog downlink approach with $M=20$ devices in the iid case, where a relatively small $\tau$, $\tau=2$, provides the best performance. 
This is due to the fact that, having less devices for training, where each device performs local updates using homogeneous local data and a distinct noisy version of the global model, the chance of having the noise in the local updates cancelled out at the aggregation phase at the PS reduces when a relatively large $\tau$ is used for local updates. 
We provide a more in-depth investigation of the impact of number of local SGD iterations on the performance of the analog downlink approach in Figures \ref{Fig_Analog_low_high_power} and \ref{Fig_Converge_low_high_power}.
We remark here that the randomness in the experiments also have an impact on the experimental results presented here.

\begin{figure}[t!]
\centering
\begin{subfigure}{.5\textwidth}
  \centering
  \includegraphics[scale=0.55,trim={19pt 7pt 46pt 38pt},clip]{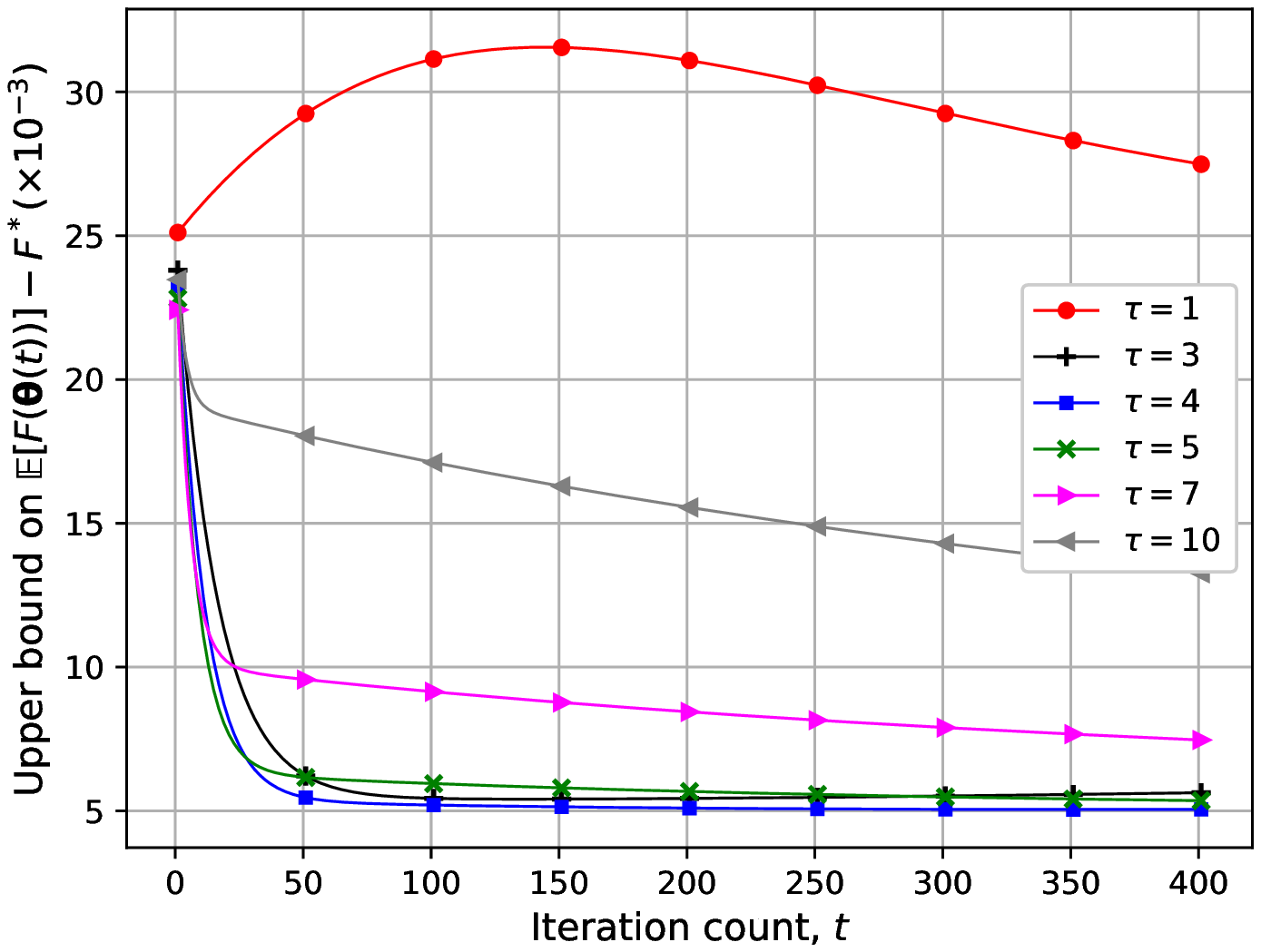}\vspace{0cm}
  \caption{$P^{\rm{dl}} = 10$}
  \label{Fig_Converge_low_power}
\end{subfigure}%
\begin{subfigure}{.5\textwidth}
  \centering
  \includegraphics[scale=0.55,trim={19pt 7pt 46pt 38pt},clip]{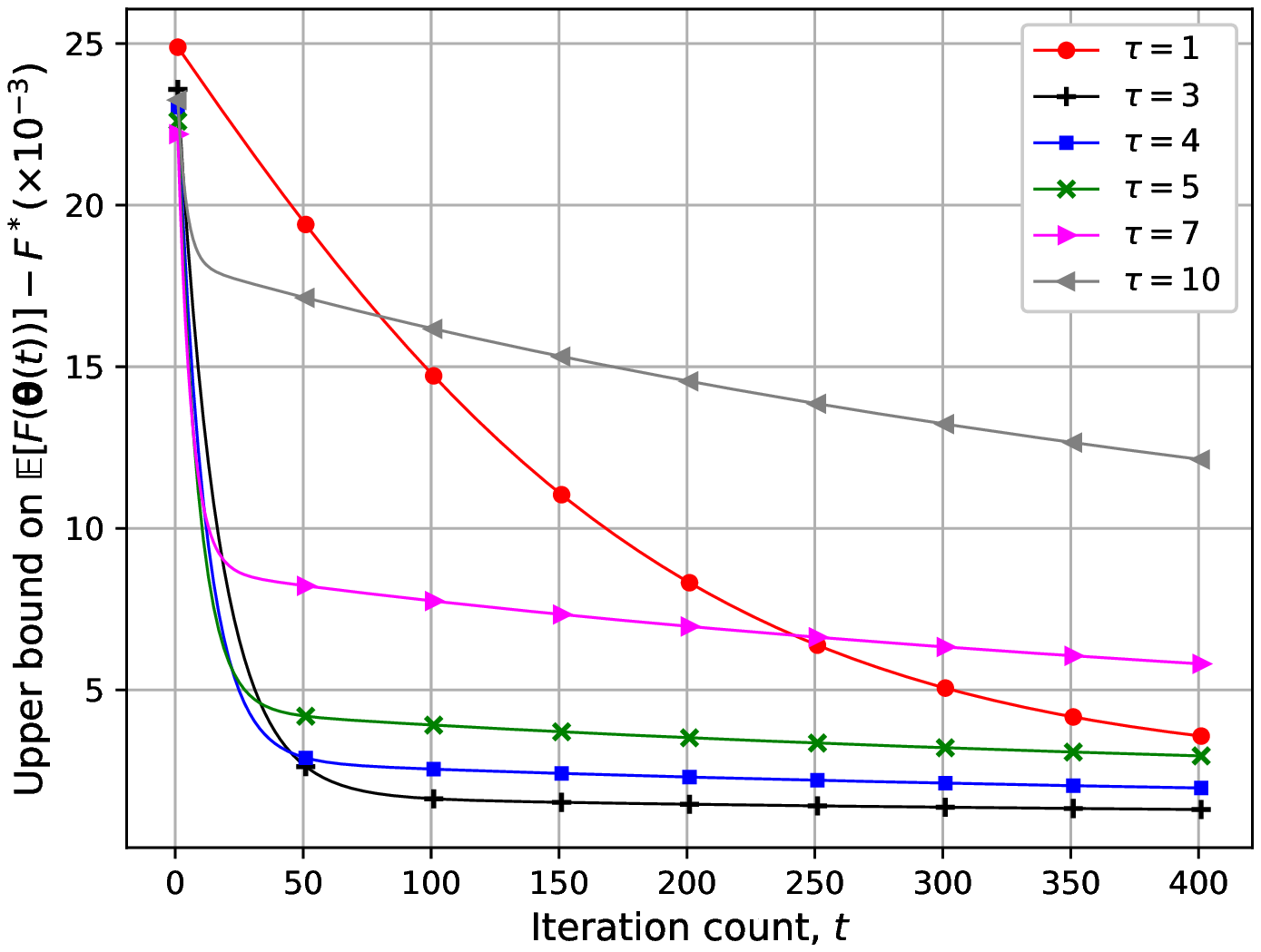}\vspace{0cm}
  \caption{$P^{\rm{dl}} = 10^2$}
  \label{Fig_Converge_high_power}
\end{subfigure}
\caption{Upper bound on $\mathbb{E} \left[ F( \boldsymbol{\theta} (t)) \right] - F^*$ for analog downlink for different $\tau$ values, $\tau \in \{ 1, 3, 4, 5, 7, 10\}$, considering non-iid data distribution with $(G^2, \Gamma) = (100, 50)$, for $\eta (t) = \frac{\min\left\{ \frac{\mu}{\mu+1}, \frac{1}{\mu \tau}\right\}}{(10^{-3}t + 1)}$, $\forall t$, $M=40$, $\mu = 0.2$, $L=10$, $\left\| {\boldsymbol{\theta}} (0) - {\boldsymbol{\theta}}^* \right\|_2^2 = 5 \times 10^3$, and $Z^2 = 2 \times 10^4$.}
\label{Fig_Converge_low_high_power}
\end{figure}

In Fig. \ref{Fig_Analog_low_high_power} we study the impact of $\tau$ on the performance of the analog downlink approach focusing on the non-iid data distribution for two different transmit power levels $P^{\rm{dl}} \in \{ 10, 10^2\}$ at the PS with $\tau \in \{1, 3, 4, 5, 7, 10\}$ and $M=40$ devices. 
We note that with a higher $P^{\rm{dl}}$ the devices receive a better/less noisy estimate of $\boldsymbol{\theta} (t)$. 
Observe that, for a smaller $P^{\rm{dl}}$, $P^{\rm{dl}}=10$, $\tau=4$ provides the best performance, while for $P^{\rm{dl}}=10^2$, the best performance is achieved for $\tau=3$. 
Therefore, for the non-iid scenario, when having a less accurate estimate of $\boldsymbol{\theta} (t)$ at the devices, a larger number of local SGD iterations should be performed compared to having a more accurate estimate of $\boldsymbol{\theta} (t)$ at the devices. 
As discussed for the performance of the digital downlink approach in Fig. \ref{Fig_IID_nonIID}, a relatively small $\tau$ value might not provide the most reliable local updates for the non-iid scenario when a good estimate of $\boldsymbol{\theta} (t)$ is not available at the devices. 
This observation is corroborated in Fig. \ref{Fig_Converge_low_high_power}, which demonstrates the analytical results on the convergence rate bound of the analog downlink approach for the non-iid scenario for different $\tau$ values, $\tau \in \{1, 3, 4, 5, 7, 10\}$, with two $P^{\rm{dl}}$ values, $P^{\rm{dl}}\in \{10, 10^2\}$. 
We observe in this figure that, for $P^{\rm{dl}} = 10$, $\tau = 4$ provides the best performance in terms of the convergence speed and the final level of the average loss. 
Whereas, for $P^{\rm{dl}} = 10^2$, $\tau=3$ provides the lowest average loss, although it has a negligibly smaller convergence speed compared to $\tau=4, 5, 7$.

In Fig. \ref{Fig_converge_IID_nonIID}, we consider the analytical convergence result of the analog downlink approach for the iid and non-iid scenarios for various $\tau$ values, $\tau \in \{1, 3, 4, 5, 7, 10\}$. 
We observe that, for the iid scenario, considering both the convergence rate and the final average loss, $\tau=5$ provides the best performance, although it has a slightly smaller convergence speed compared to $\tau =7, 10$. 
On the other hand, we observe that a smaller $\tau$ value, $\tau = 3$, has the best performance in the non-iid scenario. 
This result corroborates the observation made in Fig. \ref{Fig_IID_nonIID} for the analog downlink approach with $M=40$ devices, in which a larger $\tau$ value should be used for a less biased data distribution to obtain the best performance. 
A relatively large $\tau$ for non-iid data results in a more biased/skewed local updates with less consensus.

There results suggest that a schedule for $\tau$ that depends on the iteration $t$ might work well in a wide range of scenarios.
Specifically, start with a larger $\tau$ and decrease it as $t$ increases.

\begin{figure}[t!]
\centering
\begin{subfigure}{.5\textwidth}
  \centering
  \includegraphics[scale=0.55,trim={19pt 7pt 46pt 38pt},clip]{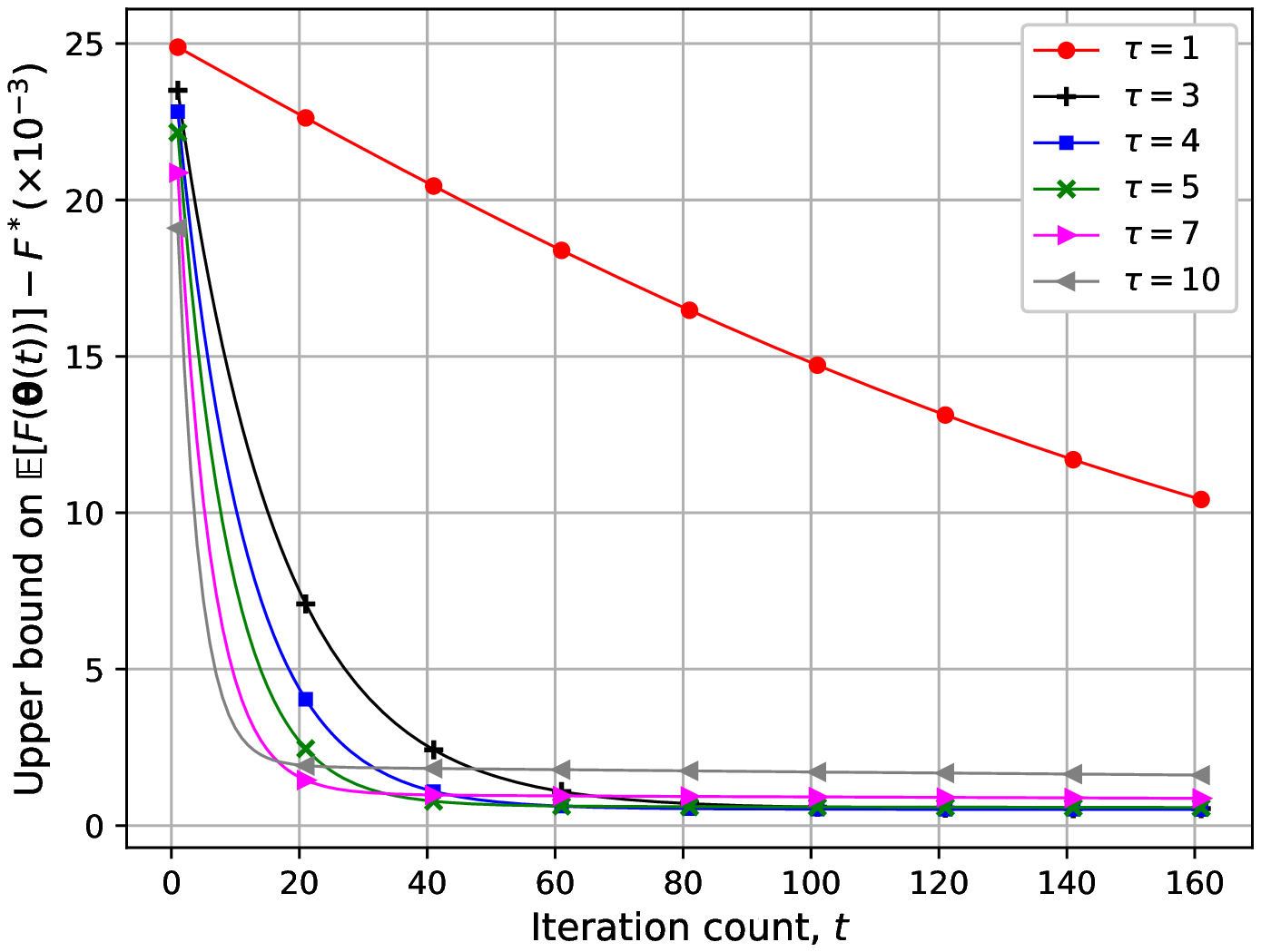}\vspace{0cm}
  \caption{Iid data distribution, $(G^2, \Gamma) = (10, 5)$}
  \label{Fig_converge_IID}
\end{subfigure}%
\begin{subfigure}{.5\textwidth}
  \centering
  \includegraphics[scale=0.55,trim={19pt 7pt 46pt 38pt},clip]{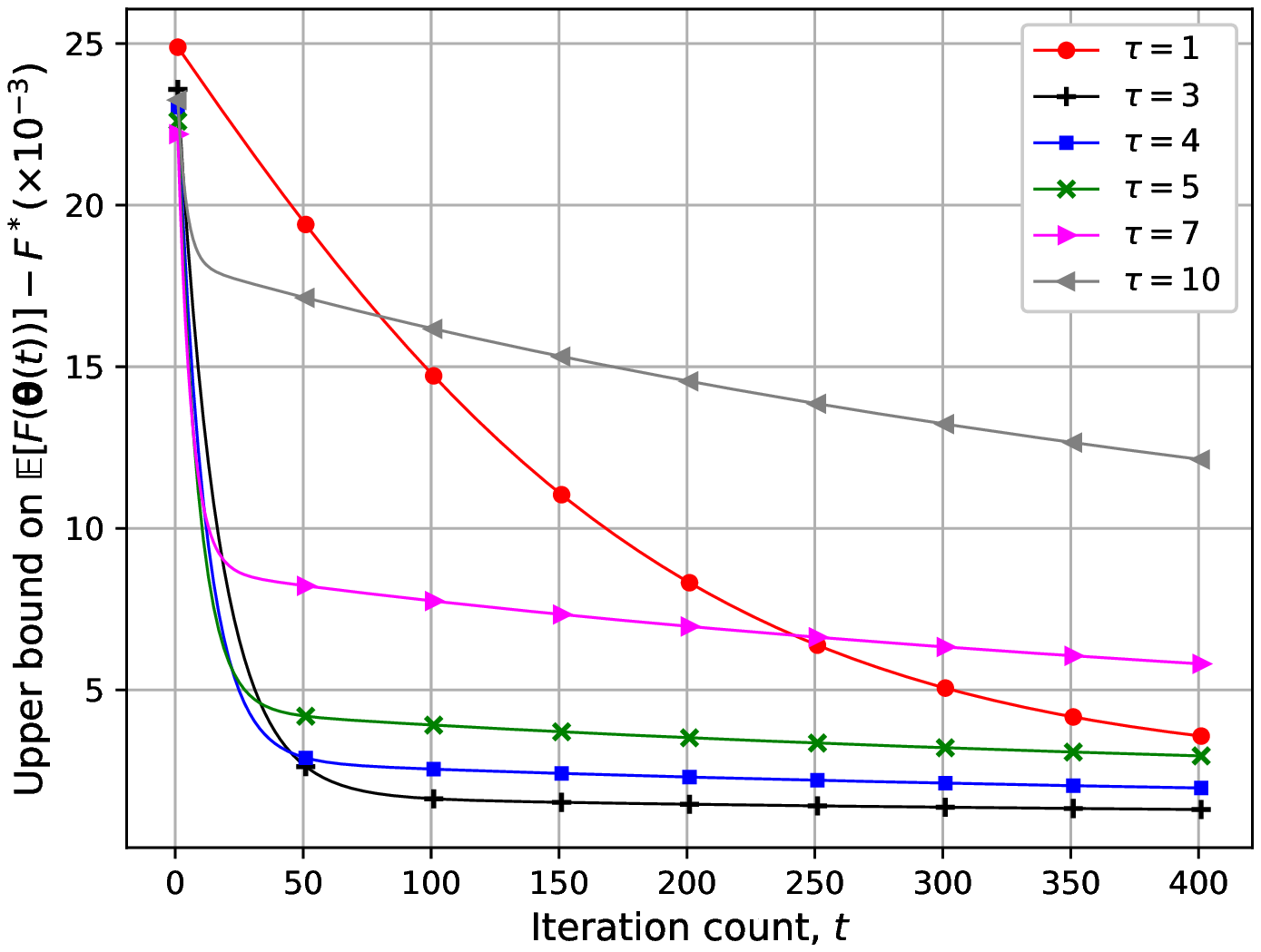}\vspace{0cm}
  \caption{Non-iid data distribution, $(G^2, \Gamma) = (100, 50)$}
  \label{Fig_converge_nonIID}
\end{subfigure}
\caption{Upper bound on $\mathbb{E} \left[ F( \boldsymbol{\theta} (t)) \right] - F^*$ for the analog downlink approach for different $\tau$ values, $\tau \in \{ 1, 3, 4, 5, 7, 10\}$, with $P^{\rm{dl}} = 10^2$, for $\eta (t) = \frac{\min\left\{ \frac{\mu}{\mu+1}, \frac{1}{\mu \tau}\right\}}{(10^{-3}t + 1)}$, $\forall t$, $M=40$, $\mu = 0.2$, $L=10$, $\left\| {\boldsymbol{\theta}} (0) - {\boldsymbol{\theta}}^* \right\|_2^2 = 5 \times 10^3$, and $Z^2 = 2 \times 10^4$.}
\label{Fig_converge_IID_nonIID}
\end{figure}

\section{Conclusions}\label{SecConc}
We have studied FEEL, where the PS with a limited power budget transmits the model parameter vector to the wireless devices over a bandwidth-limited fading broadcast channel. 
We have proposed digital and analog transmission approaches for the PS-to-devices transmission. 
With the digital approach, the PS quantizes the global model update, with respect to the global model estimate at the devices, with the knowledge of the highest common rate sustainable over the downlink broadcast channel. 
For the analysis, we have utilized a capacity achieving channel code to broadcast the same estimate of the global model update to all the devices. 
On the other hand, with the analog approach, the PS broadcasts the global model vector in an uncoded manner without employing any channel code, and the devices receive different estimates of the global model through independent wireless connections. 
In both approaches, the devices perform multiple local SGD iterations with respect to their global model estimates utilizing their local datasets. 
The power-limited wireless devices then transmit their local model updates to the PS over a bandwidth-limited fading MAC in an analog fashion, whose superiority over digital transmission for the uplink has been shown in the literature \cite{MohammadDenizDSGDCS,FLTWCMohammadDenizFading,KaibinParallelWork}. 
We have also provided a convergence analysis for the analog downlink approach to study the impact of imperfect downlink transmission, leading to noisy estimates of the global model at the devices, on the performance of FL, where for the ease of analysis we have assumed that the uplink transmission is error-free. 
Numerical experiments on the MNIST dataset have shown a significant improvement of the analog downlink approach over its digital counterpart, where the improvement is more pronounced for the non-iid data scenario. 
The analog downlink approach benefits from providing the devices with different estimates of the global model with the quality of these estimates depending on their downlink channel conditions, in which case the destructive effect of the devices with relatively worse channel conditions, and consequently less accurate estimates, can be alleviated by the devices with better channel conditions. 
However, with the digital downlink approach, the devices receive the same estimate of the model parameter vector with a common rate limited by the capacity of the worst device. 
Therefore, it is likely that all the devices perform local SGD iterations using an inaccurate estimate of the global model.
Both the experimental and analytical results have shown that a smaller number of local SGD iterations should be performed to obtain the best performance of the analog downlink approach for non-iid data compared to iid data. 
Also, for non-iid data, by increasing the transmit power at the PS, which leads to a more accurate global model estimate at the devices, a smaller number of local SGD iterations should be performed at the devices.

\appendices

\section{Proof of Theorem \ref{A_Theoremtheta_thetastarKequalM}}\label{A_AppTheorem}

The global model parameter vector for the analog downlink approach is updated as
\begin{align}\label{A_AppFDPModUpdateTheta}
\boldsymbol{\theta} (t+1) = {\boldsymbol{\theta}} (t) +  \sum\nolimits_{m =1}^M \frac{B_m}{B} \Delta \boldsymbol{\theta}_m (t).
\end{align}
We have 
\begin{align}\label{A_AppFDP_1}
\mathbb{E} \left[ \left\| \boldsymbol{\theta} (t+1) - {\boldsymbol{\theta}}^* \right\|_2^2 \right] = & \mathbb{E} \left[ \left\| {\boldsymbol{\theta}} (t) - {\boldsymbol{\theta}}^* \right\|_2^2 \right] + \mathbb{E} \bigg[ \left\| \sum\nolimits_{m =1}^M \frac{B_m}{B} \Delta \boldsymbol{\theta}_m (t) \right\|_2^2 \bigg] \nonumber\\
& \qquad \qquad \qquad \qquad \quad + 2 \mathbb{E} \left[ \langle {\boldsymbol{\theta}} (t) - {\boldsymbol{\theta}}^* , \sum\nolimits_{m =1}^M \frac{B_m}{B} \Delta \boldsymbol{\theta}_m (t) \rangle \right].  
\end{align}
Next we bound the last two terms on the right hand side (RHS) of \eqref{A_AppFDP_1}.

From the convexity of $\left\| \cdot \right\|_2^2$, it follows that
\newcommand\firstinequal{\mathrel{\overset{\makebox[0pt]{\mbox{\normalfont\tiny\sffamily (a)}}}{\le}}}
\begin{align}\label{AppLemmaTemr_2_Eq_1_2}
&\mathbb{E} \bigg[ \left\| \sum\nolimits_{m =1}^M \frac{B_m}{B} \Delta \boldsymbol{\theta}_m (t) \right\|_2^2 \bigg] \le  \sum\nolimits_{m =1}^{M} \frac{B_m}{B} \mathbb{E} \left[ \left\| \Delta \boldsymbol{\theta}_m (t) \right\|_2^2 \right] \qquad \qquad \qquad \qquad \qquad \nonumber
\end{align}
\begin{align}
& \qquad \qquad \quad = \eta^2(t) \sum\nolimits_{m =1}^{M} \frac{B_m}{B} \mathbb{E} \left[ \left\| \sum\nolimits_{i=1}^{\tau} \nabla F_m \left( \boldsymbol{\theta}_m^i (t), \xi_m^i (t) \right) \right\|_2^2 \right] \nonumber\\
& \qquad \qquad \quad  \le \eta^2(t) \tau \sum\nolimits_{m =1}^{M} \sum\nolimits_{i=1}^{\tau} \frac{B_m}{B} \mathbb{E} \left[ \left\| \nabla F_m \left( \boldsymbol{\theta}_m^i (t), \xi_m^i (t) \right) \right\|_2^2 \right] \firstinequal \eta^2(t) \tau^2 G^2,
\end{align}
where (a) follows from Assumption \ref{AssumpBoundedVarGradient}. 

We rewrite the third term on the RHS of \eqref{A_AppFDP_1} as follows:
\begin{align}\label{A_AppLemmaTemr_2_Eq_2}
&2 \mathbb{E} \left[ \langle {\boldsymbol{\theta}} (t) - {\boldsymbol{\theta}}^* ,  \sum\nolimits_{m =1}^{M} \frac{B_m}{B} \Delta \boldsymbol{\theta}_m (t) \rangle \right] \nonumber\\
& \qquad = 2 \eta(t) \sum\nolimits_{m=1}^{M} \frac{B_m}{B} \mathbb{E} \left[ \langle {\boldsymbol{\theta}}^* - {\boldsymbol{\theta}} (t) , \sum\nolimits_{i=1}^{\tau} \nabla F_m \left( \boldsymbol{\theta}_m^i (t), \xi_m^i (t) \right) \rangle \right] \nonumber\\
& \qquad = 2 \eta(t) \sum\nolimits_{m=1}^{M} \frac{B_m}{B} \mathbb{E} \left[ \langle {\boldsymbol{\theta}}^* - {\boldsymbol{\theta}} (t) , \nabla F_m \left( {\boldsymbol{\theta}} (t) + \tilde{\boldsymbol{z}}_m^{\rm{dl}} (t), \xi_m^1 (t) \right) \rangle \right] \nonumber\\
& \qquad \quad + 2 \eta(t) \sum\nolimits_{m=1}^{M} \frac{B_m}{B} \mathbb{E} \left[ \langle {\boldsymbol{\theta}}^* - {\boldsymbol{\theta}} (t) , \sum\nolimits_{i=2}^{\tau} \nabla F_m \left( \boldsymbol{\theta}_m^i (t), \xi_m^i (t) \right) \rangle \right]. 
\end{align}
We have
\begin{align}\label{A_AppLemmaTemr_2_Eq_3}
& 2 \eta(t) \sum\nolimits_{m=1}^{M} \frac{B_m}{B} \mathbb{E} \left[ \langle {\boldsymbol{\theta}}^* - {\boldsymbol{\theta}} (t) , \nabla F_m \left( {\boldsymbol{\theta}} (t) + \tilde{\boldsymbol{z}}_m^{\rm{dl}} (t), \xi_m^1 (t) \right) \rangle \right] \nonumber\\
& = 2 \eta(t) \sum\nolimits_{m=1}^{M} \frac{B_m}{B} \mathbb{E} \left[ \langle {\boldsymbol{\theta}}^* - {\boldsymbol{\theta}} (t) , \nabla F_m \left( {\boldsymbol{\theta}} (t), \xi_m^1 (t) \right) \rangle \right] \nonumber\\
& \;\;\; + 2 \eta(t) \sum\nolimits_{m=1}^{M} \frac{B_m}{B} \mathbb{E} \left[ \langle {\boldsymbol{\theta}}^* - {\boldsymbol{\theta}} (t) , \nabla F_m \left( {\boldsymbol{\theta}} (t) + \widetilde{\boldsymbol{z}}_m^{\rm{dl}} (t), \xi_m^1 (t) \right) - \nabla F_m \left( {\boldsymbol{\theta}} (t), \xi_m^1 (t) \right) \rangle \right].
\end{align}
In the following, we bound the two terms on the RHS of \eqref{A_AppLemmaTemr_2_Eq_3}. 
We have
\newcommand\thirdequal{\mathrel{\overset{\makebox[0pt]{\mbox{\normalfont\tiny\sffamily (a)}}}{=}}}
\newcommand\thirdinequal{\mathrel{\overset{\makebox[0pt]{\mbox{\normalfont\tiny\sffamily (b)}}}{\le}}}
\begin{align}\label{AppLemmaTemr_2_Eq_3}
& 2 \eta(t) \sum\nolimits_{m=1}^{M} \frac{B_m}{B} \mathbb{E} \left[ \langle {\boldsymbol{\theta}}^* - {\boldsymbol{\theta}} (t) , \nabla F_m \left( {\boldsymbol{\theta}} (t), \xi_m^1 (t) \right) \rangle \right] \nonumber\\
& \qquad \qquad \qquad \qquad \thirdequal 2 \eta(t) \sum\nolimits_{m=1}^{M} \frac{B_m}{B} \mathbb{E} \left[ \langle {\boldsymbol{\theta}}^* - {\boldsymbol{\theta}} (t) , \nabla F_m \left( {\boldsymbol{\theta}} (t) \right) \rangle \right] \nonumber\\
& \qquad \qquad \qquad \qquad \thirdinequal 2 \eta(t) \sum\nolimits_{m=1}^{M} \frac{B_m}{B} \mathbb{E} \left[ F_m (\boldsymbol{\theta}^*) - F_m \left( {\boldsymbol{\theta}} (t) \right) - \frac{\mu}{2} \left\| {\boldsymbol{\theta}} (t) - {\boldsymbol{\theta}}^* \right\|_2^2 \right] \nonumber\\
& \qquad \qquad \qquad \qquad = 2 \eta(t) \left( F^* - \mathbb{E} \left[ F\left( {\boldsymbol{\theta}} (t) \right) \right] - \frac{\mu}{2} \mathbb{E} \Big[ \left\| {\boldsymbol{\theta}} (t) - {\boldsymbol{\theta}}^* \right\|_2^2 \Big] \right),
\end{align}
where (a) and (b) follow from \eqref{AverageStochGradientEst} and Assumption \ref{AssumpStrongConvexLoss}, respectively. 
Also, from Cauchy-Schwarz inequality, we have
\begin{align}\label{A_AppLemmaTemr_2_Eq_5}
& 2 \eta(t) \sum\nolimits_{m=1}^{M} \frac{B_m}{B} \mathbb{E} \left[ \langle {\boldsymbol{\theta}}^* - {\boldsymbol{\theta}} (t) , \nabla F_m \left( {\boldsymbol{\theta}} (t) + \widetilde{\boldsymbol{z}}_m^{\rm{dl}} (t), \xi_m^1 (t) \right) - \nabla F_m \left( {\boldsymbol{\theta}} (t), \xi_m^1 (t) \right) \rangle \right]\nonumber\\
& \le \eta^2(t) \mathbb{E} \left[ \left\| {\boldsymbol{\theta}} (t) - {\boldsymbol{\theta}}^* \right\|_2^2 \right] + \mathbb{E} \left[ \left\| \sum\nolimits_{m=1}^{M} \frac{B_m}{B} \left( \nabla F_m ( \boldsymbol{\theta} (t) + \widetilde{\boldsymbol{z}}_m^{\rm{dl}} (t), \xi_m^1 (t) ) - \nabla F_m ( \boldsymbol{\theta} (t), \xi_m^1 (t) ) \right) \right\|^2 \right]\nonumber\\
& \firstinequal \eta^2(t) \mathbb{E} \left[ \left\| {\boldsymbol{\theta}} (t) - {\boldsymbol{\theta}}^* \right\|_2^2 \right] + \frac{ Z^2}{M \sigma^{\rm{dl}} {P}^{\rm{dl}}},
\end{align}
where (a) follows from Assumption \ref{AssumpBoundThetaHatmAnalog}. 
Substituting \eqref{AppLemmaTemr_2_Eq_3} and \eqref{A_AppLemmaTemr_2_Eq_5} into \eqref{A_AppLemmaTemr_2_Eq_3} yields
\begin{align}\label{A_AppLemmaTemr_2_Eq_6}
& 2 \eta(t) \sum\nolimits_{m=1}^{M} \frac{B_m}{B} \mathbb{E} \left[ \langle {\boldsymbol{\theta}}^* - {\boldsymbol{\theta}} (t) , \nabla F_m \left( {\boldsymbol{\theta}} (t) + \widetilde{\boldsymbol{z}}_m^{\rm{dl}} (t), \xi_m^1 (t) \right) \rangle \right] \nonumber\\
& \quad \quad \; \le - \mu \eta(t) \left( 1 - \eta(t)/\mu \right) \mathbb{E} \left[ \left\| {\boldsymbol{\theta}} (t) - {\boldsymbol{\theta}}^* \right\|_2^2 \right] + \frac{ Z^2}{M \sigma^{\rm{dl}} {P}^{\rm{dl}}} + 2 \eta(t) \left( F^* - \mathbb{E} \left[ F\left( {\boldsymbol{\theta}} (t) \right) \right] \right). 
\end{align}

\begin{lemma}\label{LemmaTermE}
For $0 < \eta(t) \le \frac{\mu}{\mu + 1}$, we have
\begin{align}\label{A_AppLemmaTemr_2_Eq_7}
& 2 \eta(t) \sum\nolimits_{m=1}^{M} \frac{B_m}{B} \mathbb{E} \left[ \langle {\boldsymbol{\theta}}^* - {\boldsymbol{\theta}} (t) , \sum\nolimits_{i=2}^{\tau} \nabla F_m \left( \boldsymbol{\theta}_m^i (t), \xi_m^i (t) \right) \rangle \right] \nonumber\\
& \qquad   \le - \mu \eta(t) (1 - \eta(t)) (\tau - 1) \mathbb{E} \left[ \left\| {\boldsymbol{\theta}} (t) - \boldsymbol{\theta}^* \right\|_2^2 \right] \nonumber\\  
& \qquad  \quad + (1+ \mu (1 - \eta(t))) \eta^2(t) G^2 \frac{\tau (\tau-1)(2\tau-1)}{6} + 2 \eta(t) (\tau - 1) \Gamma  \nonumber\\
& \qquad  \quad + \left( \eta^2 (t) + 1 \right) \left( \tau - 1 \right) G^2 + 2 \eta (t) \sum\nolimits_{m=1}^{M} \sum\nolimits_{i=2}^{\tau} \frac{B_m}{B} \left( F_m^* - \mathbb{E} \left[ F_m({\boldsymbol{\theta}}_m^i (t)) \right] \right). 
\end{align}
\end{lemma}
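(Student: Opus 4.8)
The plan is to reduce the left-hand side to a weighted sum of per-iterate inner products and then bound each summand by combining strong convexity with a bound on the local SGD drift. First I would take the expectation in stages: conditioning on all randomness generated before the draw of $\xi_m^i(t)$, both $\boldsymbol{\theta}^*-\boldsymbol{\theta}(t)$ and the iterate $\boldsymbol{\theta}_m^i(t)$ are measurable, so the unbiasedness property \eqref{AverageStochGradientEst} replaces each $\nabla F_m(\boldsymbol{\theta}_m^i(t),\xi_m^i(t))$ by the deterministic gradient $\nabla F_m(\boldsymbol{\theta}_m^i(t))$ inside the expectation. The quantity to bound then becomes $2\eta(t)\sum_m \frac{B_m}{B}\sum_{i=2}^{\tau}\mathbb{E}[\langle \boldsymbol{\theta}^*-\boldsymbol{\theta}(t),\nabla F_m(\boldsymbol{\theta}_m^i(t))\rangle]$.

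Next, for each index $i$ I would insert the current iterate as a pivot, writing $\boldsymbol{\theta}^*-\boldsymbol{\theta}(t)=(\boldsymbol{\theta}^*-\boldsymbol{\theta}_m^i(t))+(\boldsymbol{\theta}_m^i(t)-\boldsymbol{\theta}(t))$. To the first piece I apply $\mu$-strong convexity (Assumption \ref{AssumpStrongConvexLoss}), obtaining $\langle \boldsymbol{\theta}^*-\boldsymbol{\theta}_m^i(t),\nabla F_m(\boldsymbol{\theta}_m^i(t))\rangle \le F_m(\boldsymbol{\theta}^*)-F_m(\boldsymbol{\theta}_m^i(t))-\frac{\mu}{2}\|\boldsymbol{\theta}^*-\boldsymbol{\theta}_m^i(t)\|_2^2$. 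I would then split the gap as $F_m(\boldsymbol{\theta}^*)-F_m(\boldsymbol{\theta}_m^i(t))=(F_m(\boldsymbol{\theta}^*)-F_m^*)+(F_m^*-F_m(\boldsymbol{\theta}_m^i(t)))$; summing the first part against the weights $B_m/B$ and using $\sum_m \frac{B_m}{B}F_m(\boldsymbol{\theta}^*)=F^*$ collapses it, by definition \eqref{GammaDef}, to $\Gamma$ for each of the $\tau-1$ indices, which furnishes the $2\eta(t)(\tau-1)\Gamma$ term, while the second part is carried through unaltered and becomes the final summation term of the claim. The remaining quadratic $-\frac{\mu}{2}\|\boldsymbol{\theta}^*-\boldsymbol{\theta}_m^i(t)\|_2^2$ I would relax back toward $\boldsymbol{\theta}(t)$ by a Peter--Paul inequality with weight proportional to $\eta(t)$, yielding a negative multiple of $\|\boldsymbol{\theta}(t)-\boldsymbol{\theta}^*\|_2^2$ (the origin of the coefficient $-\mu\eta(t)(1-\eta(t))(\tau-1)$) together with a positive multiple of the drift $\|\boldsymbol{\theta}_m^i(t)-\boldsymbol{\theta}(t)\|_2^2$.

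The second pivot piece, $\langle \boldsymbol{\theta}_m^i(t)-\boldsymbol{\theta}(t),\nabla F_m(\boldsymbol{\theta}_m^i(t))\rangle$, I would control by a weighted Young's inequality, again with weight tuned to $\eta(t)$, trading it for additional drift plus a $\|\nabla F_m(\boldsymbol{\theta}_m^i(t))\|_2^2$ term bounded by $G^2$ under Assumption \ref{AssumpBoundedVarGradient}. To close the drift terms I expand $\boldsymbol{\theta}_m^i(t)-\boldsymbol{\theta}_m^1(t)=-\eta(t)\sum_{j=1}^{i-1}\nabla F_m(\boldsymbol{\theta}_m^j(t),\xi_m^j(t))$, apply Cauchy--Schwarz to move the norm inside the sum, and invoke Assumption \ref{AssumpBoundedVarGradient} to get $\mathbb{E}\|\boldsymbol{\theta}_m^i(t)-\boldsymbol{\theta}_m^1(t)\|_2^2\le \eta^2(t)(i-1)^2 G^2$; the sum $\sum_{i=2}^{\tau}(i-1)^2=\frac{\tau(\tau-1)(2\tau-1)}{6}$ is precisely what produces the cubic-in-$\tau$ factor, with the accumulated weight $1+\mu(1-\eta(t))$ assembled from the drift coefficients generated in the two relaxation steps.

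The principal difficulty I expect is the coefficient bookkeeping rather than any isolated inequality: the drift norm is fed by both the strong-convexity relaxation and the Young step, and the Young weight must be chosen as the exact function of $\eta(t)$ that leaves $-\mu\eta(t)(1-\eta(t))(\tau-1)$ on $\|\boldsymbol{\theta}(t)-\boldsymbol{\theta}^*\|_2^2$ and separates the two gradient terms cleanly into the cubic piece and the non-vanishing $(\eta^2(t)+1)(\tau-1)G^2$ piece. A related subtlety is that $\boldsymbol{\theta}_m^1(t)=\boldsymbol{\theta}(t)+\widetilde{\boldsymbol{z}}_m^{\rm dl}(t)$ carries the downlink noise, so the drift should be measured relative to the local initialization $\boldsymbol{\theta}_m^1(t)$, where only gradients appear; the noise itself is relegated to the $i=1$ term and bounded there through Assumption \ref{AssumpBoundThetaHatmAnalog} in \eqref{A_AppLemmaTemr_2_Eq_5}, and its residual footprint on these later steps is what leaves a term that does not scale with $\eta(t)$, consistent with Remark \ref{RemTau}. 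Finally, I expect the hypothesis $\eta(t)\le \mu/(\mu+1)$ to be invoked only at the end, to guarantee that the signs of the merged coefficients are as stated, in particular keeping the relaxed quadratic contribution nonpositive after all drift terms are combined.
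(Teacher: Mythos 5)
Your proposal follows essentially the same route as the paper's proof in Appendix~\ref{AppProofLemmaTermE}: the same pivot decomposition $\boldsymbol{\theta}^*-\boldsymbol{\theta}(t)=(\boldsymbol{\theta}^*-\boldsymbol{\theta}_m^i(t))+(\boldsymbol{\theta}_m^i(t)-\boldsymbol{\theta}(t))$, unbiasedness plus $\mu$-strong convexity to extract $\Gamma$ and the $F_m^*-F_m(\boldsymbol{\theta}_m^i(t))$ sum, $\eta(t)$-weighted Young inequalities feeding the drift, the $\sum_{i=2}^{\tau}(i-1)^2$ bound producing the cubic factor, and the hypothesis $\eta(t)\le\mu/(\mu+1)$ invoked only to make the coefficient of the accumulated downlink-noise norm nonpositive. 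The bookkeeping you describe, including the origin of the weight $1+\mu(1-\eta(t))$ and of the non-vanishing $(\tau-1)G^2$ term from the $\langle\widetilde{\boldsymbol{z}}_m^{\rm dl}(t),\nabla F_m\rangle$ cross term, matches the paper's derivation.
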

\begin{proof}
See Appendix \ref{AppProofLemmaTermE}. 
\end{proof}

By substituting \eqref{A_AppLemmaTemr_2_Eq_6} and \eqref{A_AppLemmaTemr_2_Eq_7} in \eqref{A_AppLemmaTemr_2_Eq_2}, it follows that
\begin{align}\label{A_AppLemmaTemr_2_Eq_8}
&2 \mathbb{E} \left[ \langle {\boldsymbol{\theta}} (t) - {\boldsymbol{\theta}}^* ,  \sum\nolimits_{m =1}^{M} \frac{B_m}{B} \Delta \boldsymbol{\theta}_m (t) \rangle \right] \le - \mu \eta (t) \left( \tau - \eta(t) (\tau - 1 + 1/\mu) \right) \mathbb{E} \left[ \left\| {\boldsymbol{\theta}} (t) - \boldsymbol{\theta}^* \right\|_2^2 \right]\nonumber\\
& + \frac{ Z^2}{M \sigma^{\rm{dl}} {P}^{\rm{dl}}} + (1+ \mu (1 - \eta(t))) \eta^2(t) G^2 \frac{\tau (\tau-1)(2\tau-1)}{6} + \left( \eta^2 (t) + 1 \right) \left( \tau - 1 \right) G^2  \nonumber\\
& + 2 \eta(t) (\tau - 1) \Gamma + 2 \eta (t) \sum\nolimits_{m=1}^{M} \sum\nolimits_{i=2}^{\tau} \frac{B_m}{B} \left( F_m^* - \mathbb{E} \left[ F_m({\boldsymbol{\theta}}_m^i (t)) \right] \right) + 2 \eta(t) \left( F^* - \mathbb{E} \left[ F\left( {\boldsymbol{\theta}} (t) \right) \right] \right),
\end{align}
which together with the inequality in \eqref{AppLemmaTemr_2_Eq_1_2}, according to \eqref{A_AppFDP_1}, the following upper bound on $\mathbb{E} \left[ \left\| \boldsymbol{\theta} (t+1) - {\boldsymbol{\theta}}^* \right\|_2^2 \right]$ is obtained:
\begin{align}\label{A_AppLemmaTemr_2_Eq_9}
& \mathbb{E} \left[ \left\| \boldsymbol{\theta} (t+1) - {\boldsymbol{\theta}}^* \right\|_2^2 \right] \le \left( 1 - \mu \eta (t) \left( \tau - \eta(t) (\tau - 1 + 1/\mu) \right) \right) \mathbb{E} \left[ \left\| {\boldsymbol{\theta}} (t) - \boldsymbol{\theta}^* \right\|_2^2 \right] + \frac{ Z^2}{M \sigma^{\rm{dl}} {P}^{\rm{dl}}} \nonumber\\
& \; + (1+ \mu (1 - \eta(t))) \eta^2(t) G^2 \frac{\tau (\tau-1)(2\tau-1)}{6} + \left( \tau - 1 + \eta^2 (t) \left( \tau^2 + \tau - 1 \right) \right) G^2  \nonumber\\
& \; + 2 \eta(t) (\tau - 1) \Gamma + 2 \eta (t) \sum\nolimits_{m=1}^{M} \sum\nolimits_{i=2}^{\tau} \frac{B_m}{B} \left( F_m^* - \mathbb{E} \left[ F_m({\boldsymbol{\theta}}_m^i (t)) \right] \right) + 2 \eta(t) \left( F^* - \mathbb{E} \left[ F\left( {\boldsymbol{\theta}} (t) \right) \right] \right)\nonumber\\
& \firstinequal \left( 1 - \mu \eta (t) \left( \tau - \eta(t) (\tau - 1 + 1/\mu) \right) \right) \mathbb{E} \left[ \left\| {\boldsymbol{\theta}} (t) - \boldsymbol{\theta}^* \right\|_2^2 \right] + \frac{ Z^2}{M \sigma^{\rm{dl}} {P}^{\rm{dl}}} \nonumber\\
& \; + (1+ \mu (1 - \eta(t))) \eta^2(t) G^2 \frac{\tau (\tau-1)(2\tau-1)}{6} \nonumber\\
& \; + \left( \tau - 1 + \eta^2 (t) \left( \tau^2 + \tau - 1 \right) \right) G^2 + 2 \eta(t) (\tau - 1) \Gamma,
\end{align}
where (a) follows sine $F^* - F(\boldsymbol{\theta} (t)) \le 0$, $\forall t$, and $F_m^* - F_m({\boldsymbol{\theta}}_m^i (t)) \le 0$, $\forall m, i, t$.
It is trivial to prove Theorem \ref{A_Theoremtheta_thetastarKequalM} from the inequality in \eqref{A_AppLemmaTemr_2_Eq_9} for $0 < \eta(t) \le \min \left\{ \frac{\mu}{\mu + 1}, \frac{1}{\mu \tau} \right\}$, $\forall t$.

\section{Proof of Lemma \ref{LemmaTermE}}\label{AppProofLemmaTermE}
We have 
\begin{align}\label{EQ_LemmaTermE_app_1}
& 2 \eta(t) \sum\nolimits_{m=1}^{M} \sum\nolimits_{i=2}^{\tau} \frac{B_m}{B} \mathbb{E} \left[ \langle \boldsymbol{\theta}^* - {\boldsymbol{\theta}} (t) , \nabla F_m \left( \boldsymbol{\theta}_m^i (t), \xi_m^i (t) \right) \rangle \right] \nonumber\\
& = 2 \eta(t) \sum\nolimits_{m=1}^{M} \sum\nolimits_{i=2}^{\tau} \frac{B_m}{B} \mathbb{E} \left[ \langle \boldsymbol{\theta}_m^i (t) - {\boldsymbol{\theta}} (t) , \nabla F_m \left( \boldsymbol{\theta}_m^i (t), \xi_m^i (t) \right) \rangle \right]\nonumber\\
& \;\; \;\;+ 2 \eta(t) \sum\nolimits_{m=1}^{M} \sum\nolimits_{i=2}^{\tau} \frac{B_m}{B} \mathbb{E} \left[ \langle \boldsymbol{\theta}^* - \boldsymbol{\theta}_m^i (t),  \nabla F_m \left( \boldsymbol{\theta}_m^i (t), \xi_m^i (t) \right) \rangle \right].  
\end{align}
For the first term on the RHS of \eqref{EQ_LemmaTermE_app_1}, we have
\begin{align}\label{AppLemmaTemr_1_D_Eq_1}
& 2 \eta(t) \sum\nolimits_{m=1}^{M} \sum\nolimits_{i=2}^{\tau} \frac{B_m}{B} \mathbb{E} \left[ \langle \boldsymbol{\theta}_m^i (t) - {\boldsymbol{\theta}} (t) , \nabla F_m \left( \boldsymbol{\theta}_m^i (t), \xi_m^i (t) \right) \rangle \right] \nonumber\\
& \qquad \qquad = 2 \eta(t) \sum\nolimits_{m=1}^{M} \sum\nolimits_{i=2}^{\tau} \frac{B_m}{B} \mathbb{E} \left[ \langle \boldsymbol{\theta}_m^i (t) - {\boldsymbol{\theta}}_m^1 (t) , \nabla F_m \left( \boldsymbol{\theta}_m^i (t), \xi_m^i (t) \right) \rangle \right] \nonumber\\
& \qquad \qquad \quad + 2 \eta(t) \sum\nolimits_{m=1}^{M} \sum\nolimits_{i=2}^{\tau} \frac{B_m}{B} \mathbb{E} \left[ \langle \widetilde{\boldsymbol{z}}^{\rm{dl}}_m (t) , \nabla F_m \left( \boldsymbol{\theta}_m^i (t), \xi_m^i (t) \right) \rangle \right].
\end{align}
From Cauchy-Schwarz inequality, we have
\begin{align}\label{AppLemmaTemr_1_D_Eq_11}
&2 \eta(t) \sum\nolimits_{m=1}^{M} \sum\nolimits_{i=2}^{\tau} \frac{B_m}{B} \mathbb{E} \left[ \langle \boldsymbol{\theta}_m^i (t) - {\boldsymbol{\theta}}_m^1 (t) , \nabla F_m \left( \boldsymbol{\theta}_m^i (t), \xi_m^i (t) \right) \rangle \right]\nonumber\\
& \; \; \; \quad \le \eta(t) \sum\nolimits_{m=1}^{M} \sum\nolimits_{i=2}^{\tau} \frac{B_m}{B} \mathbb{E} \bigg[ \frac{1}{\eta(t)} \left\| \boldsymbol{\theta}_m^i (t) - {\boldsymbol{\theta}}_m^1 (t) \right\|_2^2 + \eta(t) \left\| \nabla F_m \left( \boldsymbol{\theta}_m^i (t), \xi_m^i (t) \right) \right\|_2^2 \bigg]\nonumber\\
& \; \; \; \quad \firstinequal \sum\nolimits_{m=1}^{M} \sum\nolimits_{i=2}^{\tau} \frac{B_m}{B} \mathbb{E} \left[ \left\| \boldsymbol{\theta}_m^i (t) - {\boldsymbol{\theta}}_m^1 (t) \right\|_2^2 \right] + \eta^2 (t) \left( \tau - 1 \right) G^2,
\end{align}
and
\begin{align}\label{AppLemmaTemr_1_D_Eq_12}
& 2 \eta(t) \sum\nolimits_{m=1}^{M} \sum\nolimits_{i=2}^{\tau} \frac{B_m}{B} \mathbb{E} \left[ \langle \widetilde{\boldsymbol{z}}^{\rm{dl}}_m (t) , \nabla F_m \left( \boldsymbol{\theta}_m^i (t), \xi_m^i (t) \right) \rangle \right] \nonumber\\
& \; \; \; \quad \le \eta(t) \sum\nolimits_{m=1}^{M} \sum\nolimits_{i=2}^{\tau} \frac{B_m}{B} \mathbb{E} \bigg[ \eta(t) \left\| \widetilde{\boldsymbol{z}}^{\rm{dl}}_m (t) \right\|_2^2 + \frac{1}{\eta(t)} \left\| \nabla F_m \left( \boldsymbol{\theta}_m^i (t), \xi_m^i (t) \right) \right\|_2^2 \bigg] \nonumber\\
& \; \; \; \quad \firstinequal \eta^2(t) (\tau - 1) \sum\nolimits_{m=1}^{M} \frac{B_m}{B} \mathbb{E} \left[ \left\| \widetilde{\boldsymbol{z}}^{\rm{dl}}_m (t) \right\|_2^2 \right] + \left( \tau - 1 \right) G^2,
\end{align}
where (a) follows from Assumption \ref{AssumpBoundedVarGradient}.
Thus, the term on the left hand side (LHS) of \eqref{AppLemmaTemr_1_D_Eq_1} is bounded as
\begin{align}\label{AppLemmaTemr_1_D_Eq_1_bound}
& 2 \eta(t) \sum\nolimits_{m=1}^{M} \sum\nolimits_{i=2}^{\tau} \frac{B_m}{B} \mathbb{E} \left[ \langle \boldsymbol{\theta}_m^i (t) - {\boldsymbol{\theta}} (t) , \nabla F_m \left( \boldsymbol{\theta}_m^i (t), \xi_m^i (t) \right) \rangle \right]\nonumber \\
& \; \; \; \quad \le \sum\nolimits_{m=1}^{M} \sum\nolimits_{i=2}^{\tau} \frac{B_m}{B} \mathbb{E} \left[ \left\| \boldsymbol{\theta}_m^i (t) - {\boldsymbol{\theta}}_m^1 (t) \right\|_2^2 \right]  + \eta^2(t) (\tau - 1) \sum\nolimits_{m=1}^{M} \frac{B_m}{B} \mathbb{E} \left[ \left\| \widetilde{\boldsymbol{z}}^{\rm{dl}}_m (t) \right\|_2^2 \right] \nonumber\\
& \; \; \; \quad \quad + \left( \eta^2 (t) +1 \right) \left( \tau - 1 \right) G^2.
\end{align}
From convexity of $\left\| \cdot \right\|_2^2$, we have
\begin{align}\label{Theta_m_iTheta_m_1}
& \sum\limits_{m=1}^{M} \sum\limits_{i=2}^{\tau} \frac{B_m}{B} \mathbb{E} \left[ \left\| \boldsymbol{\theta}_m^i (t) - {\boldsymbol{\theta}}_m^1 (t) \right\|_2^2 \right] = \eta^2(t) \sum\limits_{m=1}^{M} \sum\limits_{i=2}^{\tau} \frac{B_m}{B} \mathbb{E} \left[ \left\| \sum\nolimits_{j=1}^{i-1} \nabla F_m \left( \boldsymbol{\theta}_m^i (t), \xi_m^i (t) \right) \right\|_2^2 \right] \nonumber\\
& \le \eta^2(t) \sum\limits_{m=1}^{M} \sum\limits_{i=2}^{\tau} \frac{B_m}{B} (i-1) \sum\limits_{j=1}^{i-1} \mathbb{E} \left[ \left\|  \nabla F_m \left( \boldsymbol{\theta}_m^i (t), \xi_m^i (t) \right) \right\|_2^2 \right] \firstinequal \eta^2(t) G^2 \frac{\tau (\tau-1)(2\tau-1)}{6},
\end{align}
where (a) follows from Assumption \ref{AssumpBoundedVarGradient}.
For the second term on the RHS of \eqref{EQ_LemmaTermE_app_1}, we have
\begin{align}\label{EQ_LemmaTermE_D_app_1}
& 2 \eta(t) \sum\nolimits_{m=1}^{M} \sum\nolimits_{i=2}^{\tau} \frac{B_m}{B} \mathbb{E} \left[ \langle \boldsymbol{\theta}^* - \boldsymbol{\theta}_m^i (t) , \nabla F_m \left( \boldsymbol{\theta}_m^i (t), \xi_m^i (t) \right) \rangle \right] \nonumber\\
& \thirdequal 2 \eta(t) \sum\nolimits_{m=1}^{M} \sum\nolimits_{i=2}^{\tau} \frac{B_m}{B} \mathbb{E} \left[ \langle \boldsymbol{\theta}^* - \boldsymbol{\theta}_m^i (t) , \nabla F_m \left( \boldsymbol{\theta}_m^i (t) \right) \rangle \right]\nonumber\\
& \thirdinequal 2 \eta(t) \sum\nolimits_{m=1}^{M} \sum\nolimits_{i=2}^{\tau} \frac{B_m}{B} \mathbb{E} \left[ F_m (\boldsymbol{\theta}^*) - F_m(\boldsymbol{\theta}_m^i (t)) - \frac{\mu}{2} \left\| \boldsymbol{\theta}_m^i (t) - {\boldsymbol{\theta}}^* \right\|_2^2 \right]\nonumber\\
& = 2 \eta(t) \sum\nolimits_{m=1}^{M} \sum\nolimits_{i=2}^{\tau} \frac{B_m}{B} \mathbb{E} \left[ F_m (\boldsymbol{\theta}^*) - F_m^* + F_m^* - F_m(\boldsymbol{\theta}_m^i (t)) - \frac{\mu}{2} \left\| \boldsymbol{\theta}_m^i (t) - {\boldsymbol{\theta}}^* \right\|_2^2 \right]\nonumber\\
& = 2 \eta(t) (\tau - 1) \Gamma +  2 \eta (t) \sum\nolimits_{m=1}^{M} \sum\nolimits_{i=2}^{\tau} \frac{B_m}{B} \left( F_m^* - \mathbb{E} \left[ F_m({\boldsymbol{\theta}}_m^i (t)) \right] \right) \nonumber\\
& \quad - \mu \eta(t) \sum\nolimits_{m=1}^{M} \sum\nolimits_{i=2}^{\tau} \frac{B_m}{B} \mathbb{E} \left[ \left\| \boldsymbol{\theta}_m^i (t) - {\boldsymbol{\theta}}^* \right\|_2^2 \right],  
\end{align}
where (a) follows since $\mathbb{E}_{\xi} \left[ \nabla F_m \left( \boldsymbol{\theta} (t), \xi_m^i (t) \right) \right] = \nabla F_m \left( \boldsymbol{\theta} (t)  \right)$, $\forall i, m, t$, and (b) follows due to the fact that $F_m$ is $\mu$-strongly convex. We have
\begin{align}\label{EQ_LemmaTermE_app_2}
& - \left\| \boldsymbol{\theta}_m^i (t) - {\boldsymbol{\theta}}^* \right\|_2^2 = - \left\| \boldsymbol{\theta}_m^i (t) - {\boldsymbol{\theta}}_m^1 (t) \right\|_2^2 - \left\| {\boldsymbol{\theta}}_m^1 (t) - {\boldsymbol{\theta}}^* \right\|_2^2 - 2 \langle \boldsymbol{\theta}_m^i (t) - {\boldsymbol{\theta}}_m^1 (t) , {\boldsymbol{\theta}}_m^1 (t) - {\boldsymbol{\theta}}^* \rangle \nonumber\\
& \quad \firstinequal - \left\| \boldsymbol{\theta}_m^i (t) - {\boldsymbol{\theta}}_m^1 (t) \right\|_2^2 - \left\| {\boldsymbol{\theta}}_m^1 (t) - {\boldsymbol{\theta}}^* \right\|_2^2 + \frac{1}{\eta(t)} \left\| \boldsymbol{\theta}_m^i (t) - {\boldsymbol{\theta}}_m^1 (t) \right\|_2^2 + \eta(t) \left\| {\boldsymbol{\theta}}_m^1 (t) - {\boldsymbol{\theta}}^* \right\|_2^2 \nonumber\\
&\quad  = - (1 - \eta(t)) \left\| {\boldsymbol{\theta}}_m^1 (t) - {\boldsymbol{\theta}}^* \right\|_2^2 + \Big(\frac{1}{\eta(t)} - 1\Big) \left\| \boldsymbol{\theta}_m^i (t) - {\boldsymbol{\theta}}_m^1 (t) \right\|_2^2, \quad i \in [\tau], m \in [M],
\end{align}
where (a) follows from Cauchy-Schwarz inequality. For $\eta(t) \le 1$, we have
\begin{align}\label{Theta_m_1_theta_star}
& - (1 - \eta(t)) \mathbb{E} \left[ \left\| {\boldsymbol{\theta}}_m^1 (t) - {\boldsymbol{\theta}}^* \right\|_2^2 \right] = - (1 - \eta(t)) \mathbb{E} \left[ \left\| {\boldsymbol{\theta}} (t) + \widetilde{\boldsymbol{z}}_m^{\rm{dl}} (t) - {\boldsymbol{\theta}}^* \right\|_2^2 \right] \nonumber\\
& \quad \; =  - (1 - \eta(t)) \Big( \mathbb{E} \left[ \left\| {\boldsymbol{\theta}} (t) - {\boldsymbol{\theta}}^* \right\|_2^2 \right] + \mathbb{E} \left[ \left\| \widetilde{\boldsymbol{z}}_m^{\rm{dl}} (t) \right\|_2^2 \right] + \mathbb{E} \left[ 2 \langle  \boldsymbol{\theta} (t) - {\boldsymbol{\theta}}^*, \widetilde{\boldsymbol{z}}_m^{\rm{dl}} (t) \rangle \right] \Big) \nonumber\\
& \quad \; \thirdequal  - (1 - \eta(t)) \left( \mathbb{E} \left[ \left\| {\boldsymbol{\theta}} (t) - {\boldsymbol{\theta}}^* \right\|_2^2 \right] + \mathbb{E} \left[ \left\| \widetilde{\boldsymbol{z}}_m^{\rm{dl}} (t) \right\|_2^2 \right] \right),
\end{align}
where (a) follows since $\mathbb{E} \left[ \widetilde{\boldsymbol{z}}_m^{\rm{dl}} (t) \right] = \boldsymbol{0}$, and the fact that ${\boldsymbol{\theta}} (t)$ is independent of $\widetilde{\boldsymbol{z}}_m^{\rm{dl}} (t)$, for $m \in [M]$. 
According to \eqref{EQ_LemmaTermE_app_2} and \eqref{Theta_m_1_theta_star}, it follows that, for $i \in [\tau]$, $m \in [M]$, 
\begin{align}\label{Theta_m_i_theta_star}
- \mathbb{E} \left[ \left\| \boldsymbol{\theta}_m^i (t) - {\boldsymbol{\theta}}^* \right\|_2^2 \right] \le & - (1 - \eta(t)) \mathbb{E} \left[ \left\| {\boldsymbol{\theta}} (t) - {\boldsymbol{\theta}}^* \right\|_2^2 \right] + \Big(\frac{1}{\eta(t)} - 1\Big) \mathbb{E} \left[ \left\| \boldsymbol{\theta}_m^i (t) - {\boldsymbol{\theta}}_m^1 (t) \right\|_2^2 \right] \nonumber\\
& - (1 - \eta(t)) \mathbb{E} \left[ \left\| \widetilde{\boldsymbol{z}}^{\rm{dl}}_m (t)  \right\|_2^2 \right].   
\end{align}
Substituting \eqref{Theta_m_i_theta_star} into \eqref{EQ_LemmaTermE_D_app_1} yields
\begin{align}\label{EQ_LemmaTermE_app_3}
& 2 \eta(t) \sum\nolimits_{m=1}^{M} \sum\nolimits_{i=2}^{\tau} \frac{B_m}{B} \mathbb{E} \left[ \langle \boldsymbol{\theta}^* - \boldsymbol{\theta}_m^i (t) , \nabla F_m \left( \boldsymbol{\theta}_m^i (t), \xi_m^i (t) \right) \rangle \right] \nonumber\\
& \; \le - \mu \eta(t) (1 - \eta(t)) (\tau - 1) \mathbb{E} \left[ \left\| {\boldsymbol{\theta}} (t) - {\boldsymbol{\theta}}^* \right\|_2^2 \right] + \mu (1 - \eta(t)) \eta^2(t) G^2 \frac{\tau (\tau-1)(2\tau-1)}{6} \nonumber\\
& \; \quad + 2 \eta(t) (\tau - 1) \Gamma - \mu \eta(t) \left( 1 - \eta (t) \right) (\tau - 1) \sum\nolimits_{m=1}^{M} \frac{B_m}{B} \mathbb{E} \left[ \left\| \widetilde{\boldsymbol{z}}^{\rm{dl}}_m (t)  \right\|_2^2 \right] \nonumber\\
& \; \quad + 2 \eta (t) \sum\nolimits_{m=1}^{M} \sum\nolimits_{i=2}^{\tau} \frac{B_m}{B} \left( F_m^* - \mathbb{E} \left[ F_m({\boldsymbol{\theta}}_m^i (t)) \right] \right), 
\end{align}
where we have used the inequality in \eqref{Theta_m_iTheta_m_1}.
Substituting \eqref{AppLemmaTemr_1_D_Eq_1_bound} and \eqref{EQ_LemmaTermE_app_3} into \eqref{EQ_LemmaTermE_app_1} yields
\begin{align}
& 2 \eta(t) \sum\nolimits_{m=1}^{M} \sum\nolimits_{i=2}^{\tau} \frac{B_m}{B} \mathbb{E} \left[ \langle \boldsymbol{\theta}^* - {\boldsymbol{\theta}} (t) , \nabla F_m \left( \boldsymbol{\theta}_m^i (t), \xi_m^i (t) \right) \rangle \right] \nonumber\\
& \le - \mu \eta(t) (1 - \eta(t)) (\tau - 1) \mathbb{E} \left[ \left\| {\boldsymbol{\theta}} (t) - {\boldsymbol{\theta}}^* \right\|_2^2 \right] + \left( 1+ \mu (1 - \eta(t)) \right) \eta^2(t) G^2 \frac{\tau (\tau-1)(2\tau-1)}{6} \nonumber\\
& \; \quad + 2 \eta(t) (\tau - 1) \Gamma  - \eta(t) (\tau - 1) \left( \mu - \eta(t) (\mu + 1) \right)  \sum\nolimits_{m=1}^{M} \frac{B_m}{B} \mathbb{E} \left[ \left\| \widetilde{\boldsymbol{z}}^{\rm{dl}}_m (t)  \right\|_2^2 \right] \nonumber\\
& \; \quad + \left( \eta^2 (t) +1 \right) \left( \tau - 1 \right) G^2 + 2 \eta (t) \sum\nolimits_{m=1}^{M} \sum\nolimits_{i=2}^{\tau} \frac{B_m}{B} \left( F_m^* - \mathbb{E} \left[ F_m({\boldsymbol{\theta}}_m^i (t)) \right] \right)\nonumber\\
& \firstinequal - \mu \eta(t) (1 - \eta(t)) (\tau - 1) \mathbb{E} \left[ \left\| {\boldsymbol{\theta}} (t) - {\boldsymbol{\theta}}^* \right\|_2^2 \right] + \left( 1+ \mu (1- \eta(t)) \right) \eta^2(t) G^2 \frac{\tau (\tau-1)(2\tau-1)}{6} \nonumber\\
& \; \quad + 2 \eta(t) (\tau - 1) \Gamma + \left( \eta^2 (t) +1 \right) \left( \tau - 1 \right) G^2+ 2 \eta (t) \sum\nolimits_{m=1}^{M} \sum\nolimits_{i=2}^{\tau} \frac{B_m}{B} \left( F_m^* - \mathbb{E} \left[ F_m({\boldsymbol{\theta}}_m^i (t)) \right] \right),
\end{align}
where (a) follows since $\eta(t) \le \frac{\mu}{\mu + 1}$. This completes the proof of Lemma \ref{LemmaTermE}.

\bibliographystyle{IEEEtran}
\bibliography{Report}

\end{document}